\documentclass[runningheads]{llncs}
\usepackage[T1]{fontenc}
\usepackage{graphicx}
\usepackage{hyperref}
\usepackage{color}

\usepackage[utf8]{inputenc}
\usepackage{mathtools} 
\usepackage{stmaryrd} 
\usepackage{lineno}
\usepackage{pzccal}
\usepackage{amsfonts} 
\usepackage{enumitem}\setlist[itemize]{noitemsep,topsep=-\parskip}
\usepackage{xspace}

\newcommand\hide[1]{}
\newcommand\TODO[1]{\textcolor{red}{\bf TODO: #1}}

\renewcommand\_{\raisebox{0.5pt}{\underline{\phantom{~}}}}

\newcommand\tab[1][3mm]{\hspace*{#1}}

\newcommand\eg{{\em e.g.}\,}
\newcommand\ie{{\em i.e.}\,}

\renewcommand\o\overline
\renewcommand\u\underline
\renewcommand\v\vec

\newcommand\lp{{\l\Pi}}
\newcommand\as{~\text{as}~}

\newcommand\rw\hookrightarrow
\newcommand\ra\rightsquigarrow
\newcommand\sle\subseteq

\renewcommand\th\vdash

\newcommand\ceq\coloneqq

\newcommand\bN{\mathbb{N}}

\DeclareMathOperator\imax{imax}
\DeclareMathOperator\dom{dom}
\DeclareMathOperator\FV{FV}

\newcommand\I[1]{\llbracket{#1}\rrbracket}
\newcommand\J[1]{\llparenthesis{#1}\rrparenthesis}
\newcommand\K[1]{\langle{#1}\rangle}

\renewcommand\a\alpha
\renewcommand\b\beta
\renewcommand\i\iota
\renewcommand\l\lambda
\newcommand\s\sigma
\newcommand\w\omega

\newcommand\D\Delta
\newcommand\G\Gamma

\newcommand\deq{\equiv_\Sigma}

\newcommand\va{\vec\a}

\newcommand\vl{\vec{l}}
\newcommand\vp{\vec{p}}
\newcommand\vt{\vec{t}}
\newcommand\vu{\vec{u}}
\newcommand\vx{\vec{x}}
\newcommand\vA{\vec{A}}
\newcommand\vB{\vec{B}}

\newcommand\tN{\mathtt{N}}
\newcommand\tz{\mathtt{0}}
\renewcommand\ts{\mathtt{s}}
\newcommand\tadd{\mathtt{add}}
\newcommand\tVec{\mathtt{Vec}}
\newcommand\tnil{\mathtt{nil}}
\newcommand\tcons{\mathtt{cons}}
\newcommand\Eq{\mathtt{Eq}}
\newcommand\refl{\mathtt{refl}}
\newcommand\cast{\mathtt{cast}}

\newcommand\cA{\mathcal{A}}
\newcommand\cC{\mathcal{C}}

\newcommand\cL{\mathcal{L}}
\newcommand\cM{\mathcal{M}}
\newcommand\cP{\mathcal{P}}
\newcommand\cR{\mathcal{R}}
\newcommand\cS{\mathcal{S}}
\newcommand\cT{\mathcal{T}}
\newcommand\cU{\mathcal{U}}
\newcommand\cX{\mathcal{X}}

\newcommand\lzero{\mathpzc{0}}
\newcommand\lsucc{\mathpzc{s}}
\newcommand\lmax{\mathpzc{max}}
\newcommand\limax{\mathpzc{imax}}
\newcommand\lMax{\mathpzc{Max}}
\newcommand\lC{\mathpzc{C}}
\newcommand\lV{\mathpzc{V}}

\newcommand\theory[3]{%
\vspace*{1mm}\noindent
\textcolor{blue}{\bf Theory $\Sigma_{#1}$ for #2: }%
$#3$
}

\newcommand\id{\mathsf{id}}

\newcommand\propext{\mathsf{propext}}
\newcommand\Nonempty{\mathsf{Nonempty}}
\newcommand\witness{\mathsf{witness}}
\newcommand\choice{\mathsf{choice}}

\newcommand\Quot{\mathsf{Quot}}
\newcommand\class{\mathsf{class}}
\newcommand\sound{\mathsf{sound}}
\newcommand\lift{\mathsf{lift}}

\newcommand\congrArg{\mathsf{cArg}}

\newcommand\dinst{\mathsf{inst}}
\renewcommand\v{\mathsf{v}}

\newcommand\U{\mathsf{U}}
\newcommand\Type{\mathsf{Type}}
\newcommand\Kind{\mathsf{Kind}}
\newcommand\Prop{\mathsf{Prop}}
\newcommand\llet[4]{\mathsf{let}\ #1:#2\ \ceq\ #3\ \mathsf{in}\ #4}
\newcommand\mk{\mathsf{mk}}
\newcommand\prj{\mathsf{prj}}
\newcommand\rec{\mathsf{rec}}

\newcommand\T{\mathsf{T}}
\newcommand\code[1]{{\raisebox{-0.8mm}{$\stackrel{#1}\,$}}}

\newcommand\C{\mathsf{C}}
\newcommand\eq{\mathsf{eq}}
\newcommand\gt{\mathsf{gt}}
\newcommand\lt{\mathsf{lt}}
\renewcommand\case{\mathsf{case}}

\newcommand\B{\mathsf{B}}
\newcommand\et\wedge
\newcommand\si{\mathsf{if}}

\newcommand\N{\mathsf{N}}
\newcommand\z{\mathsf{0}}
\renewcommand\s{\mathsf{s}}
\newcommand\cmp{\mathsf{cmp}}

\newcommand\set{\mathsf{set}}
\newcommand\nil{\mathsf{nil}}
\newcommand\cons{\mathsf{cons}}
\newcommand\cmpset{\mathsf{cmp}_\set}
\newcommand\add{\mathsf{add}}
\newcommand\union{\mathsf{union}}
\newcommand\incl{\mathsf{incl}}
\newcommand\fold{\mathsf{fold}}

\newcommand\typG{\mathsf{G}}

\renewcommand\S{\mathsf{S}}
\newcommand\guard{\mathsf{guard}}
\newcommand\addguard{\mathsf{addguard}}
\newcommand\incr{\mathsf{incr}}
\newcommand\cst{\mathsf{cst}}
\newcommand\var{\mathsf{var}}

\newcommand\E{\mathsf{E}}
\newcommand\Incr{\mathsf{Incr}}
\newcommand\Add{\mathsf{Add}}

\renewcommand\L{\mathsf{L}}
\newcommand\dzero{\mathsf{zero}}
\newcommand\dsucc{\mathsf{succ}}
\newcommand\dmax{\mathsf{max}}
\newcommand\dimax{\mathsf{imax}}
\newcommand\dMax{\mathsf{Max}}

\newcommand\e\varepsilon
\newcommand\sort{\mathsf{u}}

\newcommand\Fold{\mathsf{Fold}}
\newcommand\Addguard{\mathsf{Addguard}}
\newcommand\FoldAddguard{\mathsf{FoldAddguard}}

\begin{document}
\title{Encoding Lean's Type Theory in Dedukti}
%
%
\author{\href{https://blanqui.gitlabpages.inria.fr/}{Fr\'ed\'eric Blanqui}\inst{1}\orcidID{0000-0001-7438-5554} \and
\href{http://rish987.github.io}{Rishikesh Vaishnav}\inst{1}}
\authorrunning{F. Blanqui and R. Vaishnav}
%
\institute{Universit\'e Paris-Saclay, $^1$INRIA, ENS Paris-Saclay, CNRS\\Laboratoire M\'ethodes Formelles, 4 avenue des Sciences, 91190 Gif-sur-Yvette, France}
\maketitle              
\begin{abstract}

The Lean proof assistant has a rich library of mathematical
formalizations that are interesting to users of other proof
assistants. To help with the translation of this library to other
systems, we present a theory in the Dedukti logical framework in which
one can encode Lean terms and types, and define a
typability-preserving translation from some large subset of Lean to
that Dedukti theory.


\hide{\keywords{proof systems interoperability \and
  dependent type theory \and
  logical frameworks \and
  rewriting \and
  Lean \and
  Dedukti.}}
\end{abstract}

\section{Introduction}

Over the past decades, several proof assistants have been
developed for the purpose of formalizing important results in
mathematics and program verification, each with their own communities
and proof libraries. Unfortunately, it is often difficult to share
results between those proof assistants as they significantly differ in
their syntax and underlying theories \cite{kohlhase21jar}. This leads
to a duplication of work with the same mathematical theories being
formalized in different systems or, on the contrary, some mathematical
theories being available in only one system.

Since the late 90s there has been a number of prototype tools
proposed for translating from one particular system to another, \eg
\cite{mclaughlin06ijcar,keller10itp,kaliszyk13itp,carneiro16jfr}, most
of which are no longer maintained.  However, defining a separate
translation from a proof assistant to every other possible proof
assistant is a daunting task, as each translation would have to
account for the particular deduction rules of each target system. So,
one may wonder whether it is possible to have an intermediate language
to which every proof system could be translated, and which, in turn,
could be translated to every proof system.  Such a language is called
a logical framework \cite{pfenning01chapter}, and several logical
frameworks have been proposed over the years, \eg
\cite{paulson88cade,pfenning89lics,boespflug12pxtp-dedukti,rabe13mscs},
many of them being based on the $\l\Pi$-calculus \cite{harper93jacm},
which is the simplest type system combining both simple and dependent
types, that is, types that may depend on values like the type of
arrays of a given dimension.

The Dedukti logical framework extends the $\l\Pi$-calculus by enabling
the identification of types modulo a user-defined equational theory
generated from oriented equations/rewrite rules. It has been proven
that many different logics and type systems can be faithfully and
modularly represented in the $\l\Pi$-calculus modulo rewriting
\cite{cousineau07tlca,blanqui23lmcs} (rewrite rules allow for more
shallow encodings), and several tools have been developed to translate
various proof systems to Dedukti (\eg OpenTheory
\cite{assaf15pxtp-holide}, Rocq \cite{boespflug12pxtp-coqine}, Matita,
Agda \cite{genestier20fscd}, PVS \cite{hondet20types}, HOL-Light
\cite{blanqui24lpar}), and from Dedukti to various proof systems
\cite{thire18lfmtp,blanqui24lpar}.

This paper describes the first translation to Dedukti of a large
subset of Lean's type theory, called Lean$^-$, introduced in
\cite{vaishnav25ictac}. As Lean terms can be translated to Lean$^-$ in a
preliminary step using the tool
\href{https://github.com/Deducteam/Lean4Less}{\tt Lean4Less}
\cite{vaishnav25ictac}, we obtain a translation of all Lean terms to
Dedukti by composing both translations.

Because of size limitations, proofs are only sketched here but
detailed proofs can be found in \cite{blanqui26ictac-long}.

\hide{
We start by describing the type theories of the Lean proof assistant
\cite{demoura15cade,demoura21cade} and of Dedukti
\cite{saillard15phd}. We then give an overview of the translation from
Lean to Dedukti via Lean$^-$. We then provide, for each distinctive Lean
construction or feature, a Dedukti theory in which that construction
or feature can be encoded. This includes the usual constructions of
typed $\l$-calculus, algebraic universes and universe polymorphism. We
then prove that prove that the encoding of Lean$^-$ into Dedukti
preserves typing under some assumption on Lean's type theory that is
expected to hold but has not been proved yet. Finally, we present our
prototype implementation
\href{https://github.com/Deducteam/lean2dk}{\tt lean2dk} and discuss
possible improvements.
}

\subsection{Pure Type Systems}

To help compare Lean and Dedukti, we present them in the unifying
framework of Pure Type Systems (PTS)\hide{, introduced in 1989 by Berardi
and Terlouw independently as a generalization of Barendregt's cube of
type systems} \cite{barendregt92chapter}.
A PTS is specified by a tuple $\Lambda=(\cS,\cA,\cP)$ made up of a set
of sorts or universes $\cS$, a set of axioms
$\cA\sle{\cS\times\cS}$ for typing sorts, and a set of product
forming rules $\cP\sle{(\cS\times\cS)\times\cS}$. It is said
functional if $\cA$ and $\cP$ so are.
The terms and typing contexts of a PTS are:
$$t=x\mid s\mid \l x:t,t \mid t\ t\mid \Pi x:t,t\hspace*{2cm}
\G=\emptyset\mid \G,x:t$$
where $s\in\cS$, $x\in\cX$ denotes a (term) variable, $\l x\!:\!A,t$
is the function mapping a term $x$ of type $A$ to the term $t$, $(t\,u)$
is the application of the function $t$ to the term $u$, and $\Pi
x\!:\!A,B$ is the type of functions mapping a term $x$ of type $A$ to
a term of type $B$ that may depend on $x$ (written as the usual arrow
type $A\to B$ when $B$ does not depend on $x$).
We will denote by $\FV(t)$ the set of variables free in
$t$, and by $\dom(\G)$ the set of variables declared in $\G$.

A first part of PTS typing rules is recalled below:

\begin{center}
  $\cfrac{}{\th\emptyset}$~(empty)\quad
  $\cfrac{\th\G\quad \G\th A:s\in\cS\quad x\notin\G}{\th\G,x:A}$~(decl)\quad
  $\cfrac{\th\G\quad x:A\in\G}{\G\th x:A}$~(var)\\[1mm]
  
  $\cfrac{\G,x:A\th t:B\quad \G\th\Pi x:A,B:s'\in\cS}{\G\th \l x:A,t : \Pi x:A,B}$~(lam)\\[1mm]
  $\cfrac{\G\th t:\Pi x:A,B\quad \G\th u:A}{\G\th t\ u:B[x/u]}$~(app)\quad
  $\cfrac{\th\G\quad (s,s')\in\cA}{\G\th s:s'}$~(sort)\\[1mm]
  $\cfrac{\G\th A:s_A\quad \G,x:A\th B:s_B\quad ((s_A,s_B),s)\in\cP}{\G\th \Pi x:A,B:s}$~(prod)\\[1mm]
  $\cfrac{\G\th t:A\quad \G\th A\equiv B:s\in\cS}{\G\th t:B}$~(conv)
\end{center}

The last rule identifies types that are congruent modulo equations.
When those equations depend on typing constraints, like it is the case
\hide{in Martin-Löf's type theory \cite{martinlof82chapter} or} in
Lean, we use the ``semantical'' presentation of PTSs
\cite{geuvers93phd}, also called PTSs with judgmental equality. We
omit here the usual rules for reflexivity, symmetry, transitivity and
congruence.
\hide{
\begin{center}
  $\cfrac{\G\th t:A\quad \G\th A\equiv B:s\in\cS}{\G\th t:B}$~(conv)
  \quad
  $\cfrac{\G\th t:A}{\G\th t\equiv t:A}$~(refl)\\[1mm]
  
  $\cfrac{\G\th t\equiv u:A}{\G\th u\equiv t:A}$~(sym)
  \quad
  $\cfrac{\G\th t\equiv u:A\quad \G\th u\equiv v:A}{\G\th t\equiv v:A}$~(trans)\\[1mm]

  $\cfrac{\G\th A\equiv A':s_A\in\cS\quad \G,x:A\th t\equiv t':B\quad \Pi x:A,B:s\in\cS}{\G\th\l x:A,t\equiv \l x:A',t':\Pi x:A,B}$~(cong$\l$)\\[1mm]

  $\cfrac{\G\th t\equiv t':\Pi x:A,B\quad \G\th u\equiv u':A}{\G\th t\,u \equiv t'\,u':B[x/u]}$~(cong@)\\[1mm]

  $\cfrac{\G\th A\equiv A':s_A\quad \G,x:A\th B\equiv B':s_B\quad ((s_A,s_B),s)\in\cP}{\G\th\Pi x:A,B\equiv \Pi x:A',B':s}$~(cong$\Pi$)\\[1mm]

  $\cfrac{\G\th t\equiv u:A\quad \G\th A\equiv B:s\in\cS}{\G\th t\equiv u:B}$~(cong$\equiv$)
\end{center}
}
A rule that is used by all PTSs is $\b$-equivalence ($\b$-fun).
\hide{
\begin{center}
  $\cfrac{\G,x:A\th t:B\quad \G\th\Pi x:A,B:s\in\cS\quad \G\th u:A}{\G\th {(\l x:A,t)\,u}\equiv t[x/u]:B[x/u]}$~($\b$-fun)
\end{center}
}
Lean and Dedukti include other rules that we will introduce step by step.

Given a relation on terms $\cR$, let $\cM_\cR$ be its monotone closure
(smallest relation $\cM\supseteq\cR$ such that $tu\cM t'u$, $tu\cM
tu'$, $\l x\!:\!t,u\cM\l x\!:\!t',u$, \ldots when $t\cR t'$, $u\cR
u'$), and let $\G \cC_\cR\G'$ iff $\G=\G_1,x\!:\!A,\G_2$,
$\G'=\G_1,x\!:\!B,\G_2$ and $A \cM_\cR B$.

A relation $\cR$ preserves typing if $\G\th u:A$ whenever $\G\th t:A$ and
$t\cR u$.

\hide{
Historically, the first presentation of PTSs (with ($\b$-fun) only) was
using an untyped conversion \cite{barendregt92chapter}. It can be
recovered from the above presentation by simply dropping types and
typing premises in equality judgments, leading to untyped equality
judgments of the form $\G\th t\equiv u$.
}

\hide{
Barendregt's $\l$-cube is the set of the 8 functional type systems
that we obtain with $\cS=\{\Type,\Kind\}$ and $\cA=\{(\Type,\Kind)\}$
when adding to $\cP=\{((\Type,\Type),\Type)\}$ representing the
simply-typed $\l$-calculus, either type constructors with
$\{((\Kind,\Kind),\Kind)\}$, dependent types with
$\{((\Type,\Kind),\Kind)\}$, or polymorphic terms with
$\{((\Kind,\Type),\Type)\}$.
}

Many systems use an infinite hierarchy of universes
giving the PTS $\Lambda^\infty_p$:
\begin{center}
  $\cS^\infty_p =\{\U i\mid i\in\bN\}\quad
  \cA^\infty_p =\{(\U i,\U(i+1))\mid i\in\bN\}$\\
  $\cP^\infty_p =\{((\U i,\U j),\U(\max(i,j)))\mid i,j\in\bN\}$
\end{center}
This PTS is said to be predicative since a function $A\rightarrow B$
does not live in a universe smaller than the ones of $A$ and $B$.

Lean and Rocq extends this PTS by allowing the universe $\U 0=\Prop$,
used for typing propositions, to be impredicative, giving the PTS
$\Lambda^\infty_i$ \cite{coquand86lics,luo90phd}:
\begin{center}
  $\cS^\infty_i = \cS^\infty_p\quad
  \cA^\infty_i = \cA^\infty_p\quad
  \cP^\infty_i = \{((\U i,\U j),\U(\imax(i,j)))\mid i,j\in\bN\}$\\
\end{center}
where $\imax(i,0)=0$ and $\imax(i,j+1)=\max(i,j+1)$,
allowing to build a proposition by quantifying on any type,
including $\Prop$ itself.

PTSs have later been extended by adding $\eta$-equivalence
\cite{geuvers92lics}:
\begin{center}
  $\cfrac{\G\th t:\Pi x:A,B\quad x\notin\FV(t)}{\G\th t\equiv\l x\!:\!A,t\ x:\Pi x\!:\!A,B}$~($\eta$-fun)
\end{center}
local definitions $x\!:\!A\ceq t$ \cite{severi94lfcs},
\hide{
\begin{center}
  $t=\dots\mid\llet x A v b\hspace*{2cm}\G=\dots\mid\G,x:A\ceq t$\\[3mm]
  $\cfrac{\th\G\quad \G\th A:s\in\cS\quad \G\th t:A\quad x\notin\G}{\th\G,x:A\ceq t}$~(decl')
  \quad
  $\cfrac{\th\G\quad x:A\ceq t\in\G}{\G\th x:A}$~(var')\\[1mm]

  $\cfrac{\G,x:A\ceq t\th u:B}{\G\th \llet x A t u:B[x/t]}$~(let)\quad
  $\cfrac{x:A\ceq t\in\G}{\G\th x\equiv t:A}$~(def)
\end{center}
\vspace*{1mm}\noindent
}
inductive types and recursive functions
\cite{dybjer90lf,coquand88colog,stefanova99phd}, with each inductive
type declaration coming with an induction/recursion principle (also
called recursor or eliminator) and an equation for each constructor,
used to represent proofs by induction and define functions by
recursion.
For instance, the declaration of the inductive type $\tN$ of Peano
numbers with the constructors $\tz:\tN$ and $\ts:\tN\to\tN$ (successor
function) adds:
\begin{center}
  $\rec_\tN^s:\Pi Q\!:\!\tN\to s,Q\,\tz\to(\Pi n\!:\!\tN,Q\,n\to Q\,(\ts\,n))\to\Pi n\!:\!\tN,Q\,n$\\[1mm]
  $\rec_\tN^s~Q~t~u~\tz \equiv t\quad\quad
    \rec_\tN^s~Q~t~u~(\ts~n) \equiv u~n~(\rec_\tN^s~Q~t~u~n)$
\end{center}
As it may be cumbersome to define functions through the use of
recursors only, many proof assistants, including Lean, let users
define their functions using equations that are internally converted
into recursor applications \cite{goguen06chapter,cockx16jfp}. For
instance, the equations $\tadd~\tz~y = y$ and $\tadd~(\ts~x)~y =
\ts~(\tadd~x~y)$ are translated into $\tadd\ceq\l x\!:\!\tN,\l
y\!:\!\tN,\rec_\tN^s~(\l\_\!:\!\tN,\tN)~y~(\l\_\!:\!\tN,\l
r\!:\!\tN,\ts~r)~x$.

The arguments of an inductive type which are variables in the types of
its constructors are called parameters, while the others are called
indices. For instance, in the following definition of polymorphic
vectors of a given dimension:
\begin{center}
$\tVec^s: s\to\tN\to s\quad\quad
  \tnil^s: \Pi A\!:\!s,\tVec^sA~\tz$\\
  $\tcons^s: \Pi A\!:\!s,A\to\Pi n\!:\!\tN,\tVec^sA~n\to\tVec^sA~(\ts~n)$
\end{center}
the first argument of $\tVec^s$ is a parameter while its second
argument is an index.

Another common inductively defined type is Martin-L\"of's
propositional equality predicate $\Eq^s:\Pi A\!:\!s,A\to A\to\Prop$
with unique constructor $\refl^s:\Pi A\!:\!s, \Pi
x\!:\!A,\Eq^s\ A\ x\ x$ (where $A$ and $x$ are parameters), which
generates the following induction principle and equation
\cite{martinlof73lc}:
\begin{center}
  \tab[-30mm]$\rec_\Eq^{s,s'}\!:\!\Pi A\!:\!s,\Pi x\!:\!A,\Pi Q\!:\!(\Pi y\!:\!A,\Eq^sAxy\to s')$,\\
\tab[42mm]$Qx(\refl^sAx)\to\Pi y\!:\!A,\Pi e\!:\!\Eq^sAxy\to Qye$\\
  $\rec_\Eq^{s,s'}\,A~x~Q~t~x~(\refl^s\,A~x)\equiv t$
\end{center}
It can for instance be used to define a cast operator between two
equal types\hide{ (if $s\cA s'$ for some $s'$), which plays an important role
in the translation of Lean to Lean$^-$ (see Section \ref{sec-overview})}:
$$\begin{array}{l}
    \cast^s:\Pi A\!:\!s,\Pi B\!:\!s,\Eq^{s'}s\,A\,B\to A\to B\\
    \ceq \l A\!:\!s,\l B\!:\!s,\l e\!:\!\Eq^{s'}s\,A\,B,\l x\!:\!A,\rec_\Eq^{s',s}\,s\,A\,(\l B\!:\!s,\l\_\!:\!\Eq^{s'}s\,A\,B,B)\,x\,B\,e
  \end{array}$$

A full description of Lean's inductive types and recursor
rules is given in \cite{carneiro19master}.


\subsection{Universe polymorphism}

As can be seen in the previous examples, explicit universes are not
convenient and lead to definition duplications. Hence, many systems go
further and implement different forms of universe polymorphism by
implicitly taking or explicitly allowing a universe level $i\in\bN$ to
be replaced by an abstract level expression $l\in\cL$ possibly
containing level variables $\a\in\cU$
\cite{harper91tcs,herbelin05draft,sozeau14itp}:
$$l=\a\mid \lzero \mid \lsucc\ l \mid \lmax\ l\ l \mid \limax\ l\ l$$
that are compositionally interpreted, modulo a valuation
$\xi:\cU\to\bN$ for level variables, as a number
$\I{l}_\xi\in\bN$ by taking $\I\a_\xi=\xi(\a)$ and, for every function
symbol $f$,
$\I{f\,t_1\,\dots\,t_n}_\xi=\o{f}(\I{t_1}_\xi,\dots,\I{t_n}_\xi)$,
where $\o{f}:\bN^n\to\bN$ is defined as follows: $\o\lzero=0$,
$\o\lsucc(i)=i+1$, $\o\lmax(i,j)=\max(i,j)$ and
$\o\limax(i,j)=\imax(i,j)$.
The corresponding PTS $\Lambda_U$ is:
\begin{center}
  $\cS_U\!=\!\{\U l\!\mid\!l\!\in\!\cL\}~
  \cA_U\!=\!\{(\U l,\U(\lsucc\,l))\!\mid\!l\!\in\!\cL\}~
  \cP_U\!=\!\{((\U l,\U l'),\U(\limax\,l\,l'))\!\mid\!l,l'\!\in\!\cL\}$
\end{center}
together with the following additional rule identifying sorts indexed
by levels having the same denotation under any level variable
instantiation:
\begin{center}
  $\cfrac{\th\G\quad\quad\quad l\simeq l'}{\G\th\U l\equiv\U l':\U(\lsucc\,l)}$~\text{(lvl)}\quad\quad
  $\cfrac{\forall\xi,\I{l}_\xi=\I{l'}_\xi}{l\simeq l'}$
\end{center}

In this system, typing judgments may contain globally shared level
variables that are implicitly universally quantified, which can be
interpreted as families of typing judgments/derivations in the PTS
$\Lambda^\infty_i$ when varying the valuation of level variables.
Instead, Lean enforces level variables to be explicitly quantified in
declarations and applications:
\begin{center}
  $\G\ceq\dots\mid \G,x[\va]:A\mid \G,x[\va]:A\ceq t
  \quad\quad t=\dots\mid x[\vl]$\\[3mm]
  
  $\cfrac{\th\G\quad \G\th A:s\in\cS\quad x\notin\G\quad \cU(A)\sle\{\va\}}{\th\G,x[\va]:A}$~(decl)\quad
  $\cfrac{\th\G\quad x[\va]:A\in\G}{\G\th x[\vl]:A[\va/\vl]}$~(var)\\[1mm]

  $\cfrac{\th\G\quad \G\th A:s\quad \G\th t:A\quad x\notin\G\quad \cU(A,t)\sle\{\va\}}{\th\G,x[\va]:A\ceq t}$~(decl')\\[1mm]
  
  $\cfrac{\th\G\quad x[\va]:A\ceq t\in\G}{\G\th x[\vl]:A[\va/\vl]}$~(var')
  \quad
  $\cfrac{x[\va]:A\ceq t\in\G}{\G\th x[\vl] \equiv t[\va/\vl]:A[\va/\vl]}$~(def)
\end{center}
where $\va$ are pairwise distinct level variables, $\cU(t)$ are
the level variables of $t$.

For instance, propositional equality and reflexivity are declared with
$\Eq[\a]:\Pi A\!:\!\U\a,A\to A\to\Prop$ and $\refl[\a]:\Pi
A\!:\!\U\a,\Pi x\!:\!A,\Eq[\a]Axx$.

\subsection{Lean's other constructions and definitional equalities}
\label{sec-prj}

\noindent{\bf Primitive projections.} As Rocq, for efficiency reasons,
Lean features primitive projections $t.i$ ($i\in\bN$) for terms
$t:S[\vl]\vt$ where $S[\va]:\Pi\vp\!:\!\vA,\U l$ is a so-called
struct-like inductive type, that is, an inductive type having no
indices and only one constructor
$\mk_S[\va]:\Pi\vp\!:\!\vA,\Pi\vx\!:\!\vB,S[\va]\vp$. In this case,
Lean adds:
\begin{center}
  $\cfrac{\G\th v:S[\vl]\vt}{\G\th v.i:B_i\sigma}$ (proj)
  \quad
  $\cfrac{\G\th v\equiv v':S[\vl]\vt}{\G\th v.i\equiv v'.i:B_i\sigma}$ (cong-proj)\\[1mm]

  $\cfrac{\G\th\mk_S[\vl]\vt\,\vu:S[\vl]\vt}{\G\th(\mk_S[\vl]\vt\,\vu).i\equiv u_i:B_i\sigma}$ ($\b$-proj)\\[1mm]

  $\cfrac{\G\th v:S[\vl]\vt\quad \G\th v.1\equiv u_1:B_1\sigma\quad\dots\quad\G\th v.n\equiv u_n:B_n\sigma}{\G\th v\equiv \mk_S[\vl]\vt\,\vu}$ ($\eta$-proj)
\end{center}
\noindent
where $n=|\vx|$, $1\le i\le n$ and
$\sigma=[\va/\vl,\vp/\vt,x_1/v.1,\dots,x_n/v.n]$. But except for
($\eta$-proj), all the other rules are derivable by defining $v.i$ as
$\prj_S^i[\vl]\vt v$ where $\prj_S^i[\va]\!:\!\Pi\vp\!:\!\vA,\Pi
v\!:\!S[\va]\vp, B_i\theta_{i-1}\ceq\l\vp\!:\!\vA,\rec_S[\va
  l_i]\vp(\l v:S[\va]\vp,B_i\theta_{i-1})(\l\vx\!:\!\vB,x_i)$,
$\theta_i=[x_1/\prj_S^1[\va]\vp v,\dots,x_i/\prj_S^i[\va]\vp v]$,
$\rec_S[\va\b]\!:\!\Pi\vp\!:\!\vA,\Pi
Q\!:\!S[\va]\vp\!\to\!\U\b,(\Pi\vx\!:\!\vB,Q(\mk_S[\va]\vp\vx))\!\to\!\Pi
v\!:\!S[\va]\vp,Qv$ and $l_i$ is such that $\G\th B_i:\U l_i$.

\vspace*{3mm}\noindent{\bf Lean's axioms.} Lean's formalizations are
done within a non-empty context which declares several important
constants and axioms like equality, conjunction, propositional
extensionality, the axiom of choice, and the following constants for
handling quotient types \cite{carneiro19master}:

\hide{
Lean's formalizations are
done within the following builtin context which states several
important axioms like propositional extensionality and choice, and
define the notion of quotient:

\vspace*{1mm}
$\begin{array}{r@{~}c@{~}l}
  \Prop &:& \U(\lsucc\tz)\ceq\U\tz\\[1mm]
  
  \Eq[\a] &:& \Pi A\!:\!\U\a,A\to A\to\Prop\\
  \refl[\a] &:& \Pi A\!:\!\U\a,\Pi x:A,\Eq[\a]Axx\\
  \rec_\Eq[\a,\b] &:& \Pi A\!:\!\U\a,\Pi x\!:\!A,\Pi Q\!:(\Pi y\!:\!A,\Eq[\a]xy\to\U\b),\\
  && Qx(\refl[\a]Ax)\to\Pi y\!:\!A,\Pi e:\Eq[\a]Axy\to Qye\\[1mm]
  
  \wedge &:& \Prop\to\Prop\to\Prop\\
  \wedge_i &:& \Pi A\!:\!\Prop,\Pi B\!:\!\Prop,A\to B\to\wedge A B\\
  \rec_\wedge[\b] &:& \Pi A:\Prop,\Pi B:\Prop,\Pi Q:\U\b,(A\to B\to Q)\to\wedge AB\to Q\\[1mm]
  
  \leftrightarrow &:& \Prop\to\Prop\to\Prop\ceq\l A:\Prop,\l B:\Prop,\wedge(A\to B)(B\to A)\\[1mm]
  
  \propext &:& \Pi A\!:\!\Prop,\Pi B\!:\!\Prop,(A\leftrightarrow B)\to\Eq[\lsucc\tz]\Prop A B\\[1mm]

  \Nonempty[\a] &:& \U\a\to\U\a\\
  \witness[\a] &:& \Pi A\!:\!\U\a,A\to\Nonempty[\a]A\\
  \rec_\Nonempty[\a] &:& \Pi A\!:\!\U\a,\Pi Q\!:\!\Nonempty A\to\U\b,\\
  && (\Pi x\!:\!A,Q(\witness Ax))\to\Pi x\!:\!\Nonempty A,Qx\\[1mm]
  \choice[\a] &:& \Pi A\!:\!\U\a,\Nonempty A\to A\\[1mm]

  \Quot[\a] &:& \Pi A\!:\!\U\a,(A\to A\to\Prop)\to\U\a\\
  \class[\a] &:& \Pi A\!:\!\U\a,\Pi R\!:\!A\to A\to\Prop,A\to\Quot[\a]AR\\
  \rec_\Quot[\a,\b] &:& \Pi A\!:\!\U\a,\Pi R\!:\!A\to A\to\Prop,\Pi Q\!:\!\Quot AR\to\U\b,\\
  &&(\Pi x\!:\!A,Q(\class[\a]ARx))\to\Pi X\!:\!\Quot AR,QX\\[1mm]

  \sound[\a] &:& \Pi A\!:\!\U\a,\Pi R\!:\!A\to A\to\Prop,\\
  && \Pi x\!:\!A,\Pi y\!:\!A,Rxy\to\Eq[\a](\Quot[\a]AR)(\class[\a]ARx)(\class[\a]ARy)\\
\end{array}$

$\begin{array}{r@{~}c@{~}l}
  \lift[\a,\b] &:& \Pi A\!:\!\U\a,\Pi R\!:\!A\to A\to\Prop,\Pi B\!:\!\U\b,\Pi f\!:\!A\to B,\\
  && (\Pi x\!:\!A,\Pi y\!:\!A,Rxy\to\Eq[\b]B(fx)(fy))\to\Quot[\a]AR\to B\\[1mm]
\end{array}$
}

\vspace*{1mm}
$\begin{array}{r@{~}c@{~}l}
  \Quot[\a] &:& \Pi A\!:\!\U\a,(A\to A\to\Prop)\to\U\a\\
  \class[\a] &:& \Pi A\!:\!\U\a,\Pi R\!:\!A\to A\to\Prop,A\to\Quot[\a]AR\\
  \lift[\a,\b] &:& \Pi A\!:\!\U\a,\Pi R\!:\!A\to A\to\Prop,\Pi B\!:\!\U\b,\Pi f\!:\!A\to B,\\
  && (\Pi x\!:\!A,\Pi y\!:\!A,Rxy\to\Eq[\b]B(fx)(fy))\to\Quot[\a]AR\to B\\[1mm]
\end{array}$

\vspace*{1mm}
\noindent
Moreover, Lean extends the conversion with the following rules:
\begin{center}
  $\cfrac{\G\th\lift[l,m]ARBfh(\class[l]AR\,a):B}{\G\th\lift[l,m]ARBfh(\class[l]AR\,a)\equiv fa:B}$ ($\b$-quot)\\[2mm]

  $\cfrac{\G\th A:\Prop\quad \G\th v_1:A\quad \G\th v_2:A}{\G\th v_1\equiv v_2:A}$ (PI)\\[2mm]

  $\cfrac{S\text{ unit-like}^{\text{\ref{unit-type}}}\quad \G\th v_1:S[\vl]\vt\quad \G\th v_2:S[\vl]\vt}{\G\th v_1\equiv v_2:S[\vl]\vt}$ ($\eta$-unit)
\end{center}
where $S$ is
unit-like\footnote{\label{unit-type}\url{https://lean-lang.org/doc/reference/latest/Basic-Types/The-Unit-Type/}}
if it is an inductive type $S[\va]:\Pi\vp\!:\!\vA,\U l$
with a unique constructor
$\mk_S[\va]:\Pi\vp\!:\!\vA,\Pi\vx\!:\!\vB,S[\va]\vp$
with $\vB:\Prop$.

The rule (PI), called proof irrelevance, identifies any two proofs of
the same proposition \cite{debruijn94chapter}. The rule ($\eta$-unit)
is similar: any two terms of the same unit-like type are identified.

Lean axioms imply important logical principles like functional
extensionality and the excluded middle (classical logic).

\subsection{Dedukti}

Dedukti, also called the $\l\Pi$-calculus modulo rewriting
\cite{blanqui01phd-en,saillard15phd}, is an extension of the simplest
PTS of Barendregt's $\l$-cube (with an untyped conversion rule)
allowing dependent types, $\lp=(S_\lp,A_\lp,P_\lp)$, a.k.a. LF
\cite{harper93jacm}:
\begin{center}
$\cS_\lp = \{\Type,\Kind\}\quad
  \cA_\lp = \{(\Type,\Kind)\}$\\
  $\cP_\lp = \{((\Type,\Type),\Type),((\Type,\Kind),\Kind)\}$
\end{center}
Dedukti allows to extend typing contexts
with user-defined oriented equations also known as rewriting rules:
$$\G=\dots\mid \G,l\rw r$$
so that $l$ is of the form $fl_1\dots l_n$ with $f:A\in\G$ (rule
left-hand sides must be headed by a constant),
$\FV(r)\sle\FV(l)\cup\dom(\G)$ and, for the decidability of
pattern-matching, $l$ is a Miller pattern (every $x\in\FV(l)-\dom(\G)$
is applied to distinct bound variables) \cite{miller91jlc}.
Then, Dedukti extends the type conversion with:
$$\cfrac{\G=\G_1,l\rw r,\G_2\quad\dom(\theta)\sle\FV(l)-\dom(\G_1)}
{\G\th l\theta\equiv r\theta}~\text{(rew)}$$
For instance, in the context
$\G=\tN\!:\!\Type,\tz\!:\!\tN,\ts\!:\!\tN\!\to\!\tN,\tadd\!:\!\tN\!\to\!\tN\!\to\!\tN,+\,\tz\,y\rw
y,\tadd\,(\ts\,x)\,y\rw\ts\,(\add\,x\,y)$, we have
$\G\th\tadd\,(\ts\,\tz)\,(\ts\,\tz)\equiv\ts\,(\ts\,\tz)$.

In the following, we will use $\th_D$ to denote the typing relation in Dedukti.

A Dedukti theory is simply a context $\G$, that is, a set of symbol
declarations with their types together with a set of rewriting
rules. For a theory to well behave, it is required that the
combination of rewriting and $\b$-reduction is confluent, and that the
rewrite rules preserves typing (subject reduction property), two
properties that are undecidable in general but for which there exist
sufficient conditions that are decidable. For subject reduction, see
for instance \cite{blanqui20fscd}.

\subsection{Theoretical calculi vs implementations}

Up to now, we presented ideal calculi for both Lean and Dedukti. Their
implementations may however be incomplete wrt those ideal calculi in
the sense that they may be unable to check that a given term has
indeed a given type.

This may happen in Dedukti if the combination of $\b$-reduction and
rewriting is not confluent, or non-terminating.

In the case of Lean, any terminating type-checking algorithm is
necessarily incomplete since Lean's conversion relation is undecidable
\cite{carneiro19master,abel20lmcs}. Moreover, the conversion relation
implemented in Lean is not transitive, which implies that
$\b$-reduction does not preserve typing \cite{carneiro19master}.

This discrepancy between the implemented and ideal calculi comes from
the fact that they are combining a type inference algorithm with a
rewriting-based conversion algorithm which tries to prove that two
terms are convertible by reducing them both to the same term (modulo
proof irrelevance and $\eta$-rules).

Most rewrite rules are obtained by orienting the $\b$-like equations
above from left to right and linearizing the left hand-sides whenever
this is possible. Indeed, as explained in
\cite{blanqui05mscs,blanqui20fscd}, rewrite rules can often be
left-linearized without breaking subject-reduction as long as the
types of a term are convertible and type constructor are injective
modulo conversion (see Section \ref{sec-sound}).

To partially compensate for the lack of transitivity, there are two cases
though where the implemented rewrite rules are in fact more general:

\vspace*{1mm} {\bf Rewrite rule for recursors of struct-like inductive
  types.} Given a struct-like inductive type
$S[\va]\!:\!\Pi\vp\!:\!\vA,\U l$ with unique constructor
$\mk_S[\va]\!:\!\Pi\vp\!:\!\vA,\Pi\vx\!:\!\vB,S[\va]\vp$, its recursor
$$\rec_S[\va\b]\!:\!\Pi\vp\!:\!\vA,\Pi
Q\!:\!S[\va]\vp\to\U\b,(\Pi\vx\!:\!\vB,Q(\mk_S[\va]\vp\vx))\to\Pi v\!:\!S[\va]\vp,Qv$$
satisfies the equation
$\rec_S[\va\b]\,\vp\,Q\,h\,(\mk_S[\va]\vp\vx)\equiv h\vx$. But Lean
instead uses the following rewrite
rule\footnote{\href{https://github.com/digama0/lean4lean/blob/9e068dc21e6bebb8642e00c91305d03794355c46/Lean4Lean/Inductive/Reduce.lean\#L72}{https://github.com/digama0/lean4lean/blob/9e068dc21e6bebb8642e00c91305d0379\\4355c46/Lean4Lean/Inductive/Reduce.lean\#L72}}
where $n=|\vx|$ whose validity is implied by ($\eta$-proj):
$$(\b\text{-}\rec\text{-}S)\quad\rec_S[\va\b]\,\vp\,Q\,h\,v\rw h\,v.1\,\dots\,v.n$$

\vspace*{1mm}
{\bf Rewrite rule for recursors of K-like inductive types.} For
recursors on some inductively defined propositions like $\Eq$, Lean
uses a slightly more general rewrite rule, called K-like
reduction\footnote{\href{https://ammkrn.github.io/type_checking_in_lean4/type_checking/reduction.html}{https://ammkrn.github.io/type\_checking\_in\_lean4/type\_checking/reduction.html}},
which is implied by proof irrelevance. A type is a K-like type if it
is an inductive type $S[\va]:\Pi\vp\!:\!\vA,\Pi\vx:\vB,\Prop$ with a
unique constructor $\mk_S[\va]:\Pi\vp\!:\!\vA,S[\va]\vp\,\vt$ taking
no other arguments than the parameters of the inductive type. For
instance, $\Eq$ is a K-like type. Its recursor
\begin{center}
  $\begin{array}{rl}
    \rec_S[\va\b]: & \Pi\vp\!:\!\vA,\Pi Q\!:\!(\Pi\vx\!:\!\vB,S[\va]\vp\vx\to\U\b),Q\,\vt\,(\mk_S[\va]\vp)\\
    & \to\Pi\vx\!:\!\vB,\Pi v\!:\!S[\va]\vp\vx,Q\,\vx\,v\\
  \end{array}$
\end{center}
satisfies by default the equation
$\rec_S[\va\b]\,\vp\,Q\,h\,\vt\,(\mk_S[\va]\vp)\equiv h$. However
Lean uses the following more general rewrite rule instead:
$$(\b\text{-}\rec\text{-K})\quad\rec_S[\va\b]~\vp~Q~h~\vt~v\rw h\quad\text{ if }v:S[\va]\vp\vt$$
which makes the system non-terminating when $\Prop$ is impredicative
and enjoys propositional extensionality as it is the case in Lean
\cite{abel20lmcs}.


\section{Translation overview}
\label{sec-overview}

As we shall see, all the equations satisfied by Lean can be turned
into Dedukti rewrite rules, except ($\eta$-fun), (PI), ($\eta$-unit)
and ($\beta$-rec-K), which rely on typing constraints or whose
left-hand sides are not headed by a function symbol.

It is possible to encode ($\eta$-fun) by using standard rewrite rules
though \cite{goguen05popl,genestier20fscd}, but it is not very
efficient. So most Dedukti implementations support ($\eta$-fun)
primitively. On the other hand, (PI) and ($\eta$-unit) cannot be
directly encoded by using standard rewrite rules
\cite[p. 98]{vaishnav26phd}.

In \cite{vaishnav25ictac}, the second author defined Lean$^-$ as the
theory obtained by removing ($\beta$-rec-K), (PI) and ($\eta$-unit),
showed that all uses of these rules in Lean can be replaced by
applications of the cast operator defined above to derivable
propositional equalities, and implemented that translation in
\href{https://github.com/Deducteam/Lean4Less}{\tt Lean4Less}.
This translation from Lean to Lean$^-$ is a particular case of the more
general translation from extensional (ETT) to intensional type theory
(ITT) extended with functional extensionality and unicity of identity
proofs \cite{hofmann95phd,oury05tphol,winterhalter19cpp}.

\hide{(ETT) to ITT$^+$, the extension of
intensional type theory (ITT) with two axioms: functional
extensionality\hide{\footnote{$\Pi A\!\!:\!\!\U\a,\Pi
B\!\!:\!\!A\!\to\!\U\a,\Pi f\!\!:\!\!(\Pi x\!\!:\!\!A,Bx),\Pi
g\!\!:\!\!(\Pi x\!\!:\!\!A,Bx),(\Pi
x\!\!:\!\!A,\Eq[\a](Bx)(fx)(gx))\!\to\!\Eq[\a](\Pi
x\!\!:\!\!A,Bx)fg$.}} and unicity of identity proofs\hide{\footnote{$\Pi
A\!:\:\U\a,\Pi x\!:\:A,\Pi y\!:\:A,\Pi v_1\!:\:\Eq[\a]Axy,\Pi
v_2\!:\:\Eq[\a]Axy,\Eq[\tz](\Eq[\a]Axy)v_1v_2$}} (which is the
restriction of (PI) to equality)
\cite{hofmann95phd,oury05tphol,winterhalter19cpp}, where ETT is the
extension of ITT with the reflexion rule
$$\cfrac{\G\th \_:\Eq[l]Av_1v_2}{\G\th v_1\equiv v_2:A}$$
meaning that any two terms that are propositionaly equal are also
definitionaly equal (making type checking undecidable).
}

For instance, in the context
$\G=A:\Prop,v_1:A,v_2:A,B:A\to\Prop,x:Bv_1$, we have $Bv_2$ inhabited
by $x$ thanks to (PI), but $Bv_2$ is also inhabited without using (PI)
by taking instead:
$$\cast[\tz]\,(Bv_1)\,(Bv_2)\,(\congrArg[\a,\!\tz]\,A\,\Prop\,B\,(\pi\,A\,v_1\,v_2))\,x\quad\text{where}$$
$\congrArg[\a\b]\!:\!\Pi\!A\!:\!\U\a,\!\Pi\!B\!:\!\U\b,\!\Pi\!f\!:\!A\!\to\!B,\!\Pi\!x\!:\!A,\!\Pi\!y\!:\!A,\!\Eq[\a]Axy\!\to\!\Eq[\b]B(fx)(fy)$ and
$\pi:\Pi A\!:\!\Prop,\Pi v_1\!:\!A,\Pi v_2\!:\!A,\Eq[\tz]\,A\,v_1\,v_2$
are terms that can be defined without using (PI): $\congrArg$ can be
defined by using $\rec_\Eq$ only, and $\pi$, whose type is the
propositional version of (PI), \ie with $\equiv$ replaced by $\Eq$,
can be obtained by using propositional extensionality which implies
that any provable proposition is propositionaly equal to the true
proposition \cite[p. 18]{vaishnav26phd}.

Similarly, the propositional version of ($\eta$-unit) for a unit-like
type $S[\va]:\Pi\vp\!:\!\vA,\U l$ with unique constructor
$\mk_S[\va]:\Pi\vp\!:\!\vA,S[\va]\vp$, namely:
$$\Pi\vp\!:\!\vA,\Pi v_1\!:\!S[\vl]\vp,\Pi v_2\!:\!S[\vl]\vp,\Eq[l](S[\vl]\vp)\,v_1\,v_2$$
can be proved by using $\rec_S$ only (by definition of $\rec_S$, a
property $P$ holds for any element of $S$ if it holds for its unique
constructor $\mk$, and $\mk[\vl]\vt=\mk[\vl]\vt$ by reflexivity)
\cite[p. 116]{vaishnav26phd}.

Hence, our overall strategy to translate Lean to Dedukti consists of
two consecutive translation steps.
The first step is the translation from Lean to Lean$^-$ described in
\cite{vaishnav25ictac} and implemented in
\href{https://github.com/Deducteam/Lean4Less}{\tt Lean4Less}.
The second step, which is the topic of the present paper, consists in
a translation from Lean$^-$ to Dedukti which we denote by $|\cdot|$ for
terms and contexts, and $||\cdot||$ for types.

When considering a translation from a type system $S_1$ to a type
system $S_2$, we usually expect the following two properties to
hold. First, that the translation preserves inhabitation:
$(\exists t_1,\G\th_1 t_1:A_1)\implies(\exists t_2,|\G|\th_2 t_2:||A_1||)$.
If $A_1$ in inhabited in $S_1$, then $||A||$ in inhabited in $S_2$,
that is, every thing that is provable in $S_1$ is also provable in
$S_2$, or $S_2$ is powerful enough to check $S_1$-propositions, in
which case we say that the translation is sound.

Second, that the translation reflects inhabitation (the converse of
soundness):
$(\exists t_2,|\G|\th_2 t_2:||A_1||)\implies(\exists t_1,\G\th_1 t_1:A_1)$.
If $||A_1||$ is inhabited in $S_2$, then $A_1$ is also inhabited in
$S_1$, that is, $S_2$ does not prove more $S_1$-propositions than
$S_1$, in which case we say that the translation is conservative (or
complete).

Winterhalter {\em et al\,} formally proved in Rocq that the
translation from ETT to ITT is sound and conservative\hide{(ITT$^+$
does not prove more ETT-propositions than ETT)}
\cite{winterhalter19cpp}\hide{\footnote{\cite{winterhalter19cpp} proves also
that the identity translation from ITT$^+$ to ETT (for ITT$^+$ is
included in ETT) is conservative: ETT does not prove more
ITT$^+$-propositions than ITT$^+$.}}. Therefore, the first step
translation, from Lean to Lean$^-$, is sound and conservative.

In this paper, we prove that the second step translation, from
Lean$^-$ to Dedukti, is sound. Conservativity should be derivable by
extending previous results. Indeed, Cousineau and Dowek proved that
their encoding of functional PTSs is injective (modulo
$\b$-equivalence) and surjective on Dedukti terms in normal form
\cite{cousineau07tlca,cousineau09phd}, hence that their translation is
conservative if the Dedukti theory used for the encoding
terminates. It is however possible to get conservativity even in the
absence of termination by using a deep encoding of $\l$-abstraction
and application \cite{felicissimo22fscd}.

\hide{
  1) Translation from ETT to ITT

  Soundness: if G |-x t:A then |G| |- |t|:|A|

  Proof: build |G|, a valid translation of G recursively, and apply Theorem 4.4
  |\empty|=\empty
  |G,x:A|=|G|,x:|A| where |A| is obtained by Theorem 4.4 from G|-A:s

  Conservativity: if G |-x A:s and |G| |- u:|A|, then G |-x u:A

  Proof: |G| ~x G and |A| ~x A
  
  2) Translation from ITT to ETT = identity (ITT is included in ETT)

  Soundness: trivial

  Conservativity: if G |- A:s and G |-x u:A, then there is t such that G |- t:A

  Proof: G is a valid translation of G, and A is a valid translation of A.
  Therefore, by Theorem 4.4, there is t such that G |- t:A.
}


\section{Dedukti theory for PTS constructions}

For encoding the PTS constructions of Lean, we follow the pioneering
work of Cousineau and Dowek \cite{cousineau07tlca}, who proved that
any functional PTS can be encoded in Dedukti. This may seem surprising
at first glance because functional PTSs include polymorphic systems
like system F while Dedukti does not support quantifications on
types. The trick is that quantification on types can be simulated by
quantifying on type codes that can be interpreted as types thanks to
type-level rewrite rules, following the way universes can be encoded
in dependent type theory \cite[p. 87]{martinlof84book}. Another way to
see this is that the encoding is mixing both a deep encoding of PTS
types as Dedukti objects (the type codes) $|\cdot|$, and a shallow
encoding of PTS types as Dedukti types $||\cdot||$.

We first assume given a Dedukti theory $\Sigma_U$ providing a type
$\L$, and a function $|\cdot|$ translating the terms of $\cL$ into
Dedukti terms of type $\L$ (the definition of $\Sigma_U$ and $|\cdot|$
will be given in the next section). We then introduce an $\L$-indexed
type $\U$ for the deep encoding of Lean types with the constructors
$\sort$ and $\pi$ for sorts and dependent products respectively,
together with a function $\e$ interpreting those constructors in
Dedukti types (shallow encoding):

\theory{P}{PTSs}{
\U\!:\!\L\!\to\!\Type,\\
\sort\!:\!\Pi\a\!:\!\L,\U(\dsucc\,\a),~
\pi\!:\!\Pi\a\!:\!\L,\Pi\b\!:\!\L,\Pi t\!:\!\U\a,(\e\,\a\,t\!\to\!\U\b)\!\to\!\U(\dimax\,\a\,\b),\\
\e\!:\!\Pi\a\!:\!\L,\U\a\!\to\!\Type,~
\e\,\_\,(\sort\a)\rw \U\a,~
\e\,\_\,(\pi\,\a\,\b\,t\,t')\rw\Pi x\!:\!\e\,\a\,t,\e\,\b\,(t'\,x)
}

For defining the translation of a term $A$ occurring on the right
hand-side of a typing judgment, we also need to know its type, which
must be a sort \cite{barendregt92chapter}. But such a sort is not
unique: in particular, if $\G\th A:\U l$ then, for all $l'\simeq l$,
we also have $\G\th A:\U l'$. We will therefore define our translation
wrt a function $\chi$ picking a particular level for every environment
$\G$ and term $A$ typable by a sort in $\G$. In the soundness proof,
we will see that, two different sorting functions produce equivalent
terms, hence the choice of $\chi$ does not matter.

\hide{
\begin{definition}[Sorting function]
Let $\cT$ be the set of pairs $(\G,A)$ with $\G\th A\!:\!\U l$
for some $l$. A function $\chi\!:\!\cT\!\to\!\cL$ is a sorting function if,
for all $(\G,A)\in\cT$, $\G\th A\!:\!\U\chi(\G,A)$.
\end{definition}
}

\begin{definition}[Translation of PTS constructions]\label{def-trans-pts-terms}
  Let $\cT$ be the set of pairs $(\G,A)$ with $\G\th A\!:\!\U l$
for some $l$. A function $\chi\!:\!\cT\!\to\!\cL$ is a sorting function if,
for all $(\G,A)\in\cT$, $\G\th A\!:\!\U\chi_\G(A)$.
Given a sorting function $\chi$, a context $\G$ and a term $t$ typable
in $\G$, we define $|t|_\G^\chi$ in
$\Sigma_U,\Sigma_{poly},\Sigma_P$\footnote{$\Sigma_U$ is defined in
Fig. \ref{fig-univ} on page \pageref{fig-univ}, and $\Sigma_{poly}$ in
Section \ref{sec-poly} on page \pageref{sec-poly}.}  by taking
$|x|^\chi_\G = x$, $|t\,u|^\chi_\G = |t|^\chi_\G\,|u|^\chi_\G$,
$|\lambda x\!:\!A,t|^\chi_\G = \lambda
x\!:\!||A||^\chi_\G,|t|^\chi_{\G,x:A}$, $|\U l|^\chi_\G = \sort|l|$,
$|\Pi x\!:\!A,B|^\chi_\G =
\pi~|\chi_\G(A)|~|\chi_{\G,x:A}(B)|~|A|^\chi_\G~(\l
x\!:\!||A||^\chi_\G,|B|^\chi_{\G,x:A})$, where $||A||^\chi_\G =
\e\ |\chi_\G(A)|\ |A|^\chi_\G$.
\end{definition}

\hide{
Note that we use a shallow encoding for variables, abstractions and
applications, that is, PTS variables are translated to Dedukti
variables, PTS abstractions are translated to Dedukti abstractions,
and PTS applications are translated to Dedukti applications.
}

For instance, the PTS context $\G=A\!:\!\U\lzero$ is translated to
$|\G|=A:||\U\lzero||^\chi_\emptyset$ where
$||\U\lzero||^\chi_\emptyset=\e\,|l|\,|\U\lzero|=\e\,|l|\,(\sort|\lzero|)$
and $l=\chi_\emptyset(\U\lzero)$, \eg $\lsucc\lzero$. And the PTS
identity function on $A$, $(\l x\!:\!A,x)$, is translated to the
Dedukti identity function on $||A||^\chi_\G$, $(\l
x\!:\!||A||^\chi_\G,x)$, where $||A||^\chi_\G=\e\,|m|\,A$ and
$m=\chi_\G(A)$, \eg $\lzero$.

\hide{
For translating \texttt{let}'s, there are two solutions depending on
whether the targeted Dedukti checker features primitive \texttt{let}'s
or not. If the targeted checker features primitive \texttt{let}'s like
Lambdapi\footnote{\url{https://github.com/Deducteam/lambdapi}}, we can
simply translate a \texttt{let} by a \texttt{let} by taking:
$$|\llet x A t u|_\G=\llet{x}{||A||_\G}{|t|_\G}{|u|_{\G,x:||A||_\G\ceq|t|_\G}}.$$
If the targeted checker does not feature primitive \texttt{let}'s like
Dkcheck\footnote{\url{https://github.com/Deducteam/Dedukti}}, we need
to lift local definitions to global ones \cite[p. 134]{assaf15phd}.
Note that we cannot translate $\llet x A t u$ to the translation of
the $\b$-redex $(\l x:A,u)t$ because, when checking the type of $u$ in
$\llet x A t u$, we can use the fact that $x\equiv t$, which is not
the case when checking the type of $u$ in $(\l x:A,u)t$.
}


\section{Dedukti theory for universe levels}
\label{sec-univ}

For our translation from Lean to Dedukti to be sound, we in particular
have to ensure that, if two levels are equivalent in Lean, then their
translation are equivalent in some Dedukti theory $\Sigma$:
$l\simeq l'\Rightarrow\Sigma\th_D|l|\equiv|l'|$.
But, in Dedukti, two terms can be proved equivalent only if they have
the same normal form wrt some convergent rewrite system. We are
therefore looking for a finite convergent rewrite system deciding the
equational theory on natural numbers generated by $0$, $\_+1$, $\max$
and $\imax$. This theory is decidable (it is a subset of Presburger
arithmetic) and efficient algorithms have been developed for various
subcases ({\em e.g.} with $\lzero$ and $\lsucc$ only
\cite{lueker90siam}, with $\lsucc$ and $\lmax$ only \cite{bezem22tcs})
but the point here is to decide that theory using rewriting.

When there is no level variable, the task is easy since $\lmax$ and
$\limax$ can be defined by standard rewrite rules.
It is more complicated when one has level variables because, in this
case, $\lmax$ and $\limax$ satisfy non-ground algebraic equalities
like commutativity that cannot be captured by a terminating rewrite
system.
The usual strategy is to define a notion of normal form in some
extended algebra, and prove that any level expression has an
equivalent unique normal form.

In \cite{voevodsky14draft}, Voevodsky remarked that any
level expression with variables, $\lzero$, $\lsucc$ and $\lmax$ is
equivalent to an expression of the form $\lMax\,E$ where $E$ is a set
of expressions of the form (in Géran's notation \cite{geran26csl})
$\lC\ i$ or $\lV\ \a\ i$, where $\lC$ is interpreted by the identity
function and $\lV$ by the addition, such that the elements of $E$ are
pairwise incomparable, that is, is an affine function in the
$(\bN,\max,+)$ semi-ring.
The general case with $\limax$ is more complicated but has been
recently solved by Géran by guarding $\lC$ and $\lV$ by sets of level
variables:

\begin{theorem}[\cite{geran26csl}]
  Let $\o\cL$ be the set of expressions $\lMax\,E$ where $E$ is a
  finite set of sublevels, a sublevel being $\lC\ G\ (i+1)$ or
  $\lV\ (G\cup\{\a\})\ \a\ i$, where $G$ is a finite set of level
  variables, $\a$ is a level variable, and $i\in\bN$. The terms of
  $\o\cL$ are compositionally interpreted in $\bN$ by taking:
  \begin{itemize}
  \item $\o\lMax(E)=\text{if}~E=\emptyset~\text{then}~0~\text{else}~\max(E)$,
  \item $\o\lV(G,m,i)=\text{if}~0\in G~\text{then}~0~\text{else}~m+i$,
  \item $\o\lC(G,i)=\text{if}~0\in G~\text{then}~0~\text{else}~i$,
  \item $\I\emptyset_\xi=\emptyset$,
  $\I{\{\a\}}_\xi=\{\xi(\a)\}$,
  $\I{G\cup H}_\xi=\I{G}_\xi\cup\I{H}_\xi$ if $G\cap H=\emptyset$.
  \end{itemize}
  \noindent
  For all $l\in\cL$, there is a unique finite set $E$ such that
  $l\simeq\lMax\,E$ where $E$ is made of sublevels which are pairwise
  incomparable wrt the quasi-ordering $\w\preceq\w'$ such that, for
  all $\xi$, $\I\w_\xi\le\I{\w'}_\xi$.
\end{theorem}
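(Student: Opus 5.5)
Existence and uniqueness are proved separately. Existence goes by structural induction on $l\in\cL$: for each level constructor we exhibit a set $E$ of sublevels with $l\simeq\lMax\,E$. Once some finite $E$ exists, we delete every sublevel $\w\in E$ for which some other $\w'\in E$ satisfies $\w\preceq\w'$ (keeping one representative of each $\preceq$-equivalence class). This does not change the interpretation, because $\max$ ignores dominated elements pointwise in $\xi$. The result is pairwise incomparable, provided we fix that $\preceq$-equivalent sublevels are syntactically identified; see the last paragraph.

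The base cases are $\lzero\simeq\lMax\,\emptyset$ and $\a\simeq\lMax\{\lV\,\{\a\}\,\a\,0\}$.

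For $\lsucc$, we push the successor inside. The sublevel $\lV\,G\,\a\,i$ becomes $\lV\,G\,\a\,(i+1)$ and $\lC\,G\,(i+1)$ becomes $\lC\,G\,(i+2)$. This is valid because $\a\in G$ guarantees that the guard vanishes exactly when $\a$ does, in which case the value is $0$ anyway. We must also add $\lC\,\emptyset\,1$, since $\lsucc\,l\ge1$ always.

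For $\lmax$, we take the union of the two sets.

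The key case is $\limax\,l\,l'$. We first normalise $l'=\lMax\,E'$ and use the pointwise identities
\[
\imax(a,\max(b,c))=\max(\imax(a,b),\imax(a,c))
\quad\text{and}\quad
\imax(\max(a,b),c)=\max(\imax(a,c),\imax(b,c)).
\]
This reduces the case to $\imax$ of a single sublevel $\w$ of $l$ with a single sublevel $\w'$ of $l'$, except that when $E'=\emptyset$ the result is $\lMax\,\emptyset$. Write $G'$ for the guard of $\w'$. Then $\w'$ is $0$ iff $0\in\I{G'}_\xi$, and otherwise $\w'\ge1$. Hence $\imax(\w,\w')=\max(\w\text{ guarded additionally by }G',\ \w')$, that is, we replace the guard $G$ of $\w$ by $G\cup G'$. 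Each identity is checked by a case split on whether the relevant guards contain a zero.

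For uniqueness, we suppose $\lMax\,E\simeq\lMax\,F$ with both sets pairwise incomparable, and show $E\sle F$. Fix $\w\in E$ with guard $G$. We choose valuations $\xi$ with $\xi(\b)=0$ for all $\b\notin G$ and $\xi(\b)$ large and generic (distinct, widely spaced) for $\b\in G$. Under such $\xi$, every sublevel whose guard is not contained in $G$ evaluates to $0$.

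Among the surviving sublevels of $F$, we argue that some single $\w'\in F$ must dominate $\w$ for all $\xi$. The approach is a pigeonhole argument over a family of valuations sending one variable to infinity and varying the offsets; since $F$ is finite, we apply Ramsey or pigeonhole along an infinite sequence of valuations, using that interpretations are affine in each variable on the region where the guards are nonzero. This gives $\w\preceq\w'$. Symmetrically, $\w'\preceq\w''$ for some $\w''\in E$, and incomparability in $E$ forces $\w=\w''$, hence $\w\equiv\w'$.

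The remaining gap is that $\preceq$-equivalence must imply syntactic equality. For instance, one must rule out $\lV\,G\,\a\,i\equiv\lV\,G'\,\a\,i$ with $G\ne G'$; this follows by taking a valuation that is zero on the symmetric difference and nonzero elsewhere. The same method handles $\lC\,G\,(i+1)$ against $\lV$ (send $\a$ large), and the side condition $i+1\ge1$ on $\lC$ excludes $\lC\,G\,0\equiv\lMax\,\emptyset$-style collapses.

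I expect the main obstacle to be the domination step: a single sublevel of $E$ dominated by a maximum of several sublevels of $F$ must be dominated by one of them. I would first try to prove it by picking one valuation that is generic enough, with the values on $G$ spaced so far apart that the maximum over $F$ is attained by a unique $\w'$, and then deducing pointwise domination from that valuation together with the affine shape of the sublevels.
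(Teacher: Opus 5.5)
The paper does not prove this theorem; it imports it from \cite{geran26csl}. Your existence half is sound in outline: it is the construction that the paper's $\dsucc$, $\dmax$ and $\dimax$ rules implement. The genuine gap is the domination step of uniqueness. You leave it open, and the route you sketch for it fails, because a sublevel of $F$ that is maximal at a point where the guard variables are large and widely spaced need not dominate $\w$. Take $E=F=\{\lC\,\{\b\}\,5,\ \lV\,\{\b\}\,\b\,0\}$, the normal form of $\imax(5,\b)$, and $\w=\lC\,\{\b\}\,5$. For every large value of $\b$ the unique maximiser is $\lV\,\{\b\}\,\b\,0$, yet $\w\not\preceq\lV\,\{\b\}\,\b\,0$ (look at $\b=1$). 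Similarly, in the normal form $\{\lV\,\{\b\}\,\b\,0,\ \lV\,\{\a,\b\}\,\a\,0\}$ of $\limax\,\a\,\b$, a spacing with $\xi(\b)\gg\xi(\a)$ selects $\lV\,\{\b\}\,\b\,0$, which does not dominate $\lV\,\{\a,\b\}\,\a\,0$. Your pigeonhole variant does not say which variable goes to infinity or how the other guard variables are set, and that is exactly what matters. For $\lC$ sublevels, sending any variable to infinity goes in the wrong direction.

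The missing idea is a valuation tailored to $\w$. If $G$ is the guard of $\w$, put $1$ on $G$ and $0$ elsewhere. When $\w=\lV\,G\,\a\,i$, additionally send $\a$ alone to some $N$ larger than every constant and offset occurring in $F$. Since $\lMax\,E\simeq\lMax\,F$, some $\w'\in F$ satisfies $\I{\w'}_\xi\ge\I{\w}_\xi\ge1$ there, so its guard $H$ satisfies $H\sle G$. For $\w=\lC\,G\,c$ this forces either $\w'=\lC\,H\,d$ with $d\ge c$, or $\w'=\lV\,H\,\b\,j$ with $j+1\ge c$. For $\w=\lV\,G\,\a\,i$ it forces $\w'=\lV\,H\,\a\,j$ with $j\ge i$. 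In each case $\w\preceq\w'$ holds pointwise: whenever $\w\neq0$, all of $G$, hence all of $H$, is nonzero, and a nonzero variable is at least $1$. Applied to a single pair, the same argument gives the syntactic characterisation of $\preceq$ that the paper's $\le$ rules in $\Sigma_S$ implement (cited there as \cite[Theorem 29]{geran26csl}). Antisymmetry up to syntactic equality follows immediately from it.

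Three smaller slips should also be fixed.
\begin{itemize}
\item In the $\limax$ case, distributing over an empty maximum is wrong, since $\imax(0,c)=c$. Your pairwise reduction returns $\lMax\,\emptyset$ instead of $l'$ when $l\simeq\lMax\,\emptyset$. Keep $E'$ alongside the re-guarded sublevels, as the paper's $\dimax$ rule does by folding over $E$ starting from $F$.
\item In the successor case, the guard does not vanish ``exactly when $\a$ does''. The rule is valid because a sublevel with a triggered guard is $0$ both before and after incrementing, and is absorbed by $\lC\,\emptyset\,1$.
\item Zeroing the whole symmetric difference of $G$ and $G'$ kills both sublevels when both differences are nonempty. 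Zero only $G\setminus G'$.
\end{itemize}
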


For instance, the Géran normal form (GNF) of a variable $\a$
is $\lMax\{\lC\{\a\}\,\a\,0\}$.

One difficulty with these notions of normal form is that they are
defined up to set equality (with two kinds of sets: sets of level
variables and sets of sublevels).

\noindent
A solution is to extend Dedukti with associative-commutative
(AC) symbols \cite{ferey21phd}:
\begin{center}
  $\begin{array}{r@{\,}l}\G=\dots&\mid \G,\text{AC}(x)\\
    &(x\in\dom(\G))\end{array}$\quad
  $\cfrac{\text{AC}(x)\in\G}{\G\th xtu \equiv xut}\quad
  \cfrac{\text{AC}(x)\in\G}{\G\th x(xtu)v \equiv xt(xuv)}$
\end{center}
\noindent
and use matching modulo AC \cite{contejean04rta} to rewrite
terms. Then, a set $\{\w_1,\ldots,\w_n\}$ can be represented by
$\{\w_1\}\cup\dots\cup\{\w_n\}$ where $\cup$ is declared AC, and
duplicated elements can be removed by using the rule $x\cup x\rw x$
which can reduce $x\cup(y\cup x)$ to $x\cup y$ when using matching
modulo AC.

In \cite{genestier20fscd}, Genestier devised a finite convergent
rewrite system computing Voevodsky's normal form with this
approach. It is possible to get rid of matching modulo AC by using a
particular rewriting strategy alternating standard rewriting steps and
AC-canonization steps \cite{blanqui22fscd}. But the current
implementations of AC matching in Dedukti and of AC-canonization in
Lambdapi are not very efficient.

Instead, we propose to get rid of AC by encoding each level variable
$\a$ by a unique natural number $\i_\a$ so that the elements of a set
of level variables and of a set of sublevels can be totally
ordered. Indeed, in this case, we can represent sets of level
variables and sets of sublevels as ordered lists without duplicates
(more efficient data structures could be used instead but terms
usually contain only a few level variables). By doing so, we obtain
the Dedukti theory $\Sigma_U$ of Fig. \ref{fig-univ} that we explain
hereafter.

\begin{figure}[!p]\caption{Dedukti theory $\Sigma_U$ for universes without polymorphic constants\label{fig-univ}}

{\small
\theory{T}{basic polymorphism}{
  \T\!:\!\Type,\tau\!:\!\T\!\to\!\Type
}

\theory{C}{comparisons}{
\code\C\!:\!\T,~\C\hide{\!:\!\Type}\!\ceq\!\tau\code\C,~ \lt\!:\!\C,~ \eq\!:\!\C,~ \gt\!:\!\C,\\
\case\!:\!\Pi a\!:\!\T,\C\!\to\!\tau a\!\to\!\tau a\!\to\!\tau a\!\to\!\tau a,\\
\case~\_~\lt~x~\_~\_\rw x,~
\case~\_~\eq~\_~x~\_\rw x,~
\case~\_~\gt~\_~\_~x\rw x
}

\theory{B}{Booleans}{
\code\B\!:\!\T, ~\B\hide{\!:\!\Type}\!\ceq\!\tau\code\B,~ \!\top\!:\!\B,~ \bot\!:\!\B,\\
\et\!:\!\B\!\to\!\B\!\to\!\B,~
\!\top\,\et\,\!\top \rw \!\top,~
\bot\,\et\,\_ \rw \bot,~
\_\,\et\,\bot \rw \bot,\\
\si\!:\!\Pi a\!:\!\T,\B\!\to\!\tau a\!\to\!\tau a\!\to\!\tau a,~
\si~\!\top~x~\_ \rw x,~
\si~\bot~\_~x \rw x
}

\theory{N}{naturals}{
\code\N\!:\!\T,~ \N\hide{\!:\!\Type}\!\ceq\!\tau\code\N,~ \z\!:\!\N,~ \s\!:\!\N\!\to\!\N,~ \cmp_\N\!:\!\N\!\to\!\N\!\to\!\C,\\
\cmp_\N~\z~\z\rw\eq,~
\cmp_\N~(\s~\_)~\z\rw\gt,~
\cmp_\N~\z~(\s~\_)\rw\lt,~
\cmp_\N~(\s~x)~(\s~y)\rw\cmp_\N~x~y,\\
=_\N\hide{:\!\N\!\to\!\N\!\to\!\B}\ceq\!\l x\!:\!\N,\l y\!:\!\N,\case~\code\B~(\cmp_\N~x~y)~\!\bot~\!\top~\bot
\le_\N\hide{:\!\N\!\to\!\N\!\to\!\B}\ceq\!\l x\!:\!\N,\l y\!:\!\N,\case~\code\B~(\cmp_\N~x~y)~\!\top~\!\top~\bot
}

\theory{\set}{polymorphic sets}{\\
\set\!:\!\T\!\to\!\T,~ \nil\!:\!\Pi a\!:\!\T,\tau(\set~a),~ \cons\!:\!\Pi a\!:\!\T,\tau a\!\to\!\tau(\set~a)\!\to\!\tau(\set~a),\\[1mm]
\fold\!:\!\Pi a\!:\!\T,\Pi b\!:\!\T,(\tau a\!\to\!\tau b\!\to\!\tau b)\!\to\!\tau(\set~a)\!\to\!\tau b\!\to\!\tau b,\\
\fold~a~b~f~(\nil~\_)~y \rw y,~
\fold~a~b~f~(\cons~\_~x~S)~y \rw \fold~a~b~f~S~(f~x~y),\\[1mm]
\add\!:\!\Pi a\!:\!\T,(\tau a\!\to\!\tau a\!\to\!\C)\!\to\!\tau a\!\to\!\tau(\set~a)\!\to\!\tau(\set~a),\\
\add~a~\_~x~(\nil~\_) \rw \cons~a~x~(\nil~a),\\
\add~a~f~x~(\cons~\_~y~S\as T)
\rw \case~(\set~a)~(f~x~y)~(\cons~a~x~T)~T~(\cons~a~y~(\add~a~f~x~S)),\\[1mm]
\union\!:\!\Pi a\!:\!\T,(\tau a\!\to\!\tau a\!\to\!\C)\!\to\!\tau(\set~a)\!\to\!\tau(\set~a)\!\to\!\tau(\set~a),\\
\union~\_~\_~(\nil~\_)~T \rw T,~
\union~a~f~(\cons~\_~x~S)~T \rw \add~a~f~x~(\union~a~f~S~T),\\[1mm]
\incl\!:\!\Pi a\!:\!\T,(\tau a\!\to\!\tau a\!\to\!\C)\!\to\!\tau(\set~a)\!\to\!\tau(\set~a)\!\to\!\B,~
\incl~\_~\_~(\nil~\_)~\_ \rw \!\top,\\
\incl~\_~\_~(\cons~\_~\_~\_)~(\nil~\_) \rw \bot,\\
\incl~a~f~(\cons~\_~x~S'\as S)~(\cons~\_~y~T') \rw
\case~\code\B~(f~x~y)~\bot~(\incl~a~f~S'~T')~(\incl~a~f~S~T'),\\[1mm]
\cmpset\!:\!\Pi a\!:\!\T,(\tau a\!\to\!\tau a\!\to\!\C)\!\to\!\tau(\set~a)\!\to\!\tau(\set~a)\!\to\!\C,~
\cmpset~\_~\_~(\nil~\_)~(\nil~\_) \rw \eq,\\
\cmpset~\_~\_~(\nil~\_)~(\cons~\_~\_~\_) \rw \lt,~
\cmpset~\_~\_~(\cons~\_~\_~\_)~(\nil~\_) \rw \gt,\\
\cmpset~a~f~(\cons~\_~x~S)~(\cons~\_~y~T) \rw \case~\code\C~(f~x~y)~\lt~(\cmpset~a~f~S~T)~\gt
}

\theory{G}{sets of naturals}{
\code\typG\hide{\!:\!\T}\!\ceq\!\set\,\code\N,~\typG\hide{\!:\!\Type}\!\ceq\!\tau\code\typG,~
\cmp_\typG\hide{\!:\!\typG\!\to\!\typG\!\to\!\C}\!\ceq\!\cmpset~\code\N~\cmp_\N
}

\theory{S}{sublevels}{
\code\S\!:\!\T,~ \S\hide{\!:\!\Type}\!\ceq\!\tau\code\S,~
\var\!:\!\typG\!\to\!\N\!\to\!\N\!\to\!\S,~ \cst\!:\!\typG\!\to\!\N\!\to\!\S,\\
\cmp_\S\!:\!\S\!\to\!\S\!\to\!\C,~
\cmp_\S~(\cst~\_~\_)~(\var~\_~\_~\_)\rw\lt,~
\cmp_\S~(\var~\_~\_~\_)~(\cst~\_~\_)\rw\gt,\\
\cmp_\S~(\cst~G~i)~(\cst~H~j) \rw \case~\code\C~(\cmp_\typG~G~H)~\lt~(\cmp_\N~i~j)~\gt,\\
\cmp_\S~(\var~G~x~i)~(\var~H~y~j) \rw \case~\code\C~(\cmp_\typG~G~H)~\lt~(\case~\code\C~(\cmp_\N~x~y)~\lt~(\cmp_\N~i~j)~\gt)~\gt,\\[1mm]
\guard\!:\!\S\!\to\!\typG,~ \guard~(\var~G~\_~\_)\rw G,~
\guard~(\cst~G~\_)\rw G,\\[1mm]
\addguard\!:\!\S\!\to\!\S\!\to\!\S,~
\addguard~(\cst~G~i)~\w \rw \cst~(\union~\code\N~\cmp_\N~G~(\guard~\w))~i,\\
\addguard~(\var~G~x~i)~\w \rw \var~(\union~\code\N~\cmp_\N~G~(\guard~\w))~x~i,\\[1mm]
\incr\!:\!\S\!\to\!\S,~ \incr~(\var~G~x~i) \rw \var~G~x~(\s~i),~
\incr~(\cst~G~i) \rw \cst~G~(\s~i),\\[1mm]
\le:\!\S\!\to\!\S\!\to\!\B,~
\var~\_~\_~\_ \le \cst~\_~\_ \rw \bot,~
\cst~G~i \le \cst~H~j \rw \incl~\code\N~H~G\et i\le_\N j,\\
\cst~G~i \le \var~H~\_~j \rw \incl~\code\N~H~G\et i\le_\N\s~j,\\
\var~G~x~i \le \var~H~y~j \rw \incl~\code\N~H~G~\et~x =_\N y~\et~ i\le_\N j
}

\theory{E}{sets of sublevels}{
\E\hide{\!:\!\Type}\!\ceq\!\tau(\set\,\code\S),\\
\Incr\!:\!\E\!\to\!\E,~
\Incr~(\nil~\_) \rw \nil~\code\S,~
\Incr~(\cons~\_~\w~S) \rw \cons~\code\S~(\incr~\w)~(\Incr~S),\\[1mm]
\Add\!:\!\S\!\to\!\E\!\to\!\E,~
\Add~\w~(\nil~\_) \rw \cons~\code\S~\w~(\nil~\code\S),~
\Add~\w~(\cons~\_~\w'~E\as F)\\
\rw \si~(\set\,\code\S)~(\w\le \w')~F~(\si~(\set\,\code\S)~(\w'\le \w)~(\add~\code\S~\cmp_\S~\w~E)~(\add~\code\S~\cmp_\S~\w'~(\Add~\w~E)))
}

\theory{I}{universes with $\imax$}{\\
  \L\!:\!\Type,~
  \dzero\!:\!\L,~ \dsucc\!:\!\L\!\to\!\L,~ \dmax\!:\!\L\!\to\!\L\!\to\!\L,~ \dimax\!:\!\L\!\to\!\L\!\to\!\L,~ \dMax\!:\!\E\!\to\!\L,\\
\dzero \rw \dMax~(\nil~\code\S),~
\dsucc~(\dMax~E) \rw \dMax~(\Add~(\cst~(\nil~\code\N)~(\s~\z))~(\Incr~E)),\\
\Fold\hide{\!:\!(\S\!\to\!\E\!\to\!\E)\!\to\!\E\!\to\!\E\!\to\!\E}\!\ceq\!\fold~\code\S~(\set\,\code\S),~
\dmax~(\dMax~E)~(\dMax~F) \rw \dMax~(\Fold~\Add~E~F),\\
\Addguard\hide{\!:\!\S\!\to\!\S\!\to\!\E\!\to\!\E}\!\ceq\! \l u\!:\!\S,\l v\!:\!\S,\Add~(\addguard~u~v),\\
\FoldAddguard\hide{\!:\!\E\!\to\!\S\!\to\!\E\!\to\!\E}\!\ceq\! \l F\!:\!\E,\l u\!:\!\S,\Fold~(\Addguard~u)~F,\\  
\dimax~(\dMax~\_)~(\dMax~(\nil~\_)) \rw \dMax~(\nil~\code\S),\\
\dimax~(\dMax~E)~(\dMax~(\cons~\_~\_~\_\as F)) \rw \dMax~(\Fold~(\FoldAddguard~F)~E~F)
}
}
\end{figure}


First note that Dedukti does not allow quantifications on types. It is
however possible to simulate polymorphism by quantifying on type codes
instead (whose type is $\T$) which can be interpreted as actual types
by applying the function $\tau$. Hence, for instance, $\code\C$ is
type code of the type $\C$, and the $\case$ function can be used to
generate values of type $\tau\,a$ for any type code $a$.

The system $\Sigma_U$ is a dependently-typed higher-order rewrite
system that could be turned into a fully simply-typed first-order
rewrite system by unfolding the definition of $\fold$ and by
duplicating the definition of each polymorphic symbol for each type
code in $\T$ (monomorphisation).

The theories for comparisons, Booleans and natural numbers provide a
few standard functions on these data types. Sets are implemented by
using the function $\add$ which returns an ordered list without
duplicates if its second argument so is. $\incl$ is the inclusion
predicate. $\cmpset$ turns a total order on $\tau\,a$ into a total
order on $\tau(\set\,a)$.  $\cmp_\S$ provides a total order on
sublevels. $\guard$ returns the guard set of a
sublevel. $\addguard\,s\,s'$ extends the guard set of $s'$ with the
guard set of $s$. $\incr$ increases the constant part of a sublevel by
$1$. Finally, $\le$ implements the quasi-ordering $\preceq$
\cite{geran26csl}.

\begin{definition}\label{def-trans-level}
  Let $\Sigma_U$ be the Dedukti theory in Fig. \ref{fig-univ}. The
  function $|\cdot|_c$ from levels to Dedukti terms of type $\L$ is
  defined by $|\a|_c =
  \dMax(\cons\,\code\S\,(\var\,(\nil\,\code\N)\,\u{\i_\a}\,\z)\,(\nil\,\code\S))$,
  $|\lzero|_c=\dzero$, $|\lsucc\,l|_c=\dsucc\,|l|_c$,
  $|\lmax\,l_1\,l_2|_c=\dmax\,|l_1|_c\,|l_2|_c$, and
  $|\limax\,l_1\,l_2|_c=\dimax\,|l_1|_c\,|l_2|_c$, where $\i$ is a
  bijection from level variables to $\bN$ and, for all natural number
  $n$, $\u{n}$ is the Dedukti term of type $\N$ such that $\u{0}=\z$
  and $\u{n+1}=\s\,\u{n}$.
\end{definition}

We then check that equivalent levels are translated to equivalent terms:

\begin{theorem}
  \label{th-univ-adequate}
  $l_1\simeq l_2$ iff $\Sigma_U\th_D|l_1|_c\equiv|l_2|_c$.
\end{theorem}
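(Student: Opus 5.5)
The plan is to reduce both directions to Géran's normal form theorem (the Theorem of \cite{geran26csl} above), using the rewrite system $\Sigma_U$ as an evaluator that computes, for every level $l$, a Dedukti term of the form $\dMax\,\u{E}$, where $\u{E}$ is the encoding of a normal-form set of sublevels.

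\textbf{Step 1: define the encoding of normal forms and their semantics.} For each GNF $\lMax\,E$ I define a canonical closed Dedukti term $\u{\lMax\,E}$ of type $\L$ built from $\dMax$, $\cons$, $\nil$, $\var$, $\cst$, $\s$ and $\z$. In this term:
\begin{itemize}
\item every guard set $G$ is written as the list of $\i_\a$ for $\a\in G$, sorted by $\cmp_\N$ and without duplicates;
\item level variables are written as $\u{\i_\a}$;
\item $E$ is written as the list of its sublevels, sorted by $\cmp_\S$ and without duplicates.
\end{itemize}
I also give every normal-form data term (natural numbers, sets, sublevels, sublevel sets) a semantics under a valuation $\xi$, matching Géran's interpretation.

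\textbf{Step 2: correctness of each auxiliary symbol, by induction on its data arguments.} On normal data:
\begin{itemize}
\item $\cmp_\N$, $\cmpset$ and $\cmp_\S$ compute total orders; for $\cmp_\S$ this is a lexicographic order, and it is antisymmetric because guard sets are canonical.
\item $\add$ and $\union$ preserve sortedness and no-duplicates and compute set union.
\item $\incl$ computes inclusion of sorted lists.
\item $\le$ on sublevels decides $\preceq$. This requires a small semantic lemma for each of the four cases. For instance, $\lC\,G\,i\preceq\lV\,H\,y\,j$ holds iff $H\subseteq G$ and $i\le j+1$, using that $y\in H$ and that a valuation may set $y$ to $1$.
\item $\Add\,\w\,E$ returns a sorted, pairwise-incomparable list denoting $\max(\w,E)$, provided $E$ is sorted and antichain.
\item $\incr$ and $\Incr$ are correct with respect to $\lsucc$. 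Note that under a guard containing $0$ the value stays $0$; this matches $\lsucc$ only because $\dsucc$ also adds $\lC\,\emptyset\,1$, and this needs a short check.
\item For $\dimax$: when $F$ is nonempty, $\imax(l_1,l_2)$ equals the max over $u\in E$ and $v\in F$ of $u$ guarded by $\guard\,v$, together with $F$ itself. Each sublevel of $F$ has a guard containing its own variable or is a constant $\ge1$, so it vanishes exactly when it is $0$. Hence $\max(F)=0$ iff every $v\in F$ has a zero-valued guard, which makes the guarded terms vanish exactly when $l_2$ does.
\end{itemize}

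\textbf{Step 3: normalisation by induction on $l$.} I then show by induction on $l$ that $|l|_c\rw^*\u{N(l)}$, where $N(l)$ is a sorted antichain $\lMax$-term denoting the same function of $\xi$ as $l$. By uniqueness in Géran's theorem, $N(l)$ is the GNF of $l$. The base case $|\a|_c$ is already the canonical form, with guard $\nil$, which I take as representing $\{\a\}$ in the intended reading.

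\textbf{Conclusion of both directions.} For ($\Rightarrow$): if $l_1\simeq l_2$, they have the same GNF by uniqueness. The canonical encoding is injective because the sorting is deterministic, so both translations rewrite to the same term and are convertible.

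For ($\Leftarrow$): I need confluence of $\Sigma_U$ together with $\b$. The system is orthogonal, since left-hand sides are left-linear and non-overlapping; for instance the $\cmpset$ and $\case$ rules are disjoint. Confluence therefore follows. So $|l_1|_c\equiv|l_2|_c$ implies their normal forms $\u{N(l_1)}$ and $\u{N(l_2)}$ are equal. Injectivity of the encoding then gives $N(l_1)=N(l_2)$, hence $l_1\simeq l_2$.

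The main obstacle will be Step 2 for $\Add$ and $\dimax$, namely proving that the results are still antichains and are semantically correct. What I will try first is to prove an invariant on $\Add$ alone, then derive $\dmax$ and $\dimax$ as folds of $\Add$, reusing that invariant.
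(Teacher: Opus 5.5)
Your route differs from the paper's. The paper proves that $\Sigma_U$ is weakly orthogonal and terminating. It gives every symbol a semantics $\K{\cdot}$ that each rule is claimed to preserve, via Géran's Propositions 42--44 and Theorem 29. It reads normal forms back into $\o\cL$ by an injective map $\cdot^*$, and then simply asserts, in its step (e), that every normal form reads back to a GNF. You instead follow the reduction of $|l|_c$ itself and carry explicit invariants: sorted, duplicate-free, antichain, each variable in its own guard. This needs no global termination and makes explicit what (e) leaves unproved. (A minor point: the system is only weakly orthogonal, since $\bot\,\et\,\_$ and $\_\,\et\,\bot$ overlap on $\bot\,\et\,\bot$; confluence still holds.)

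However, two of these invariants are not maintained by $\Sigma_U$ as written, exactly where you expected trouble.

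\textbf{Base case.} No ``intended reading'' of the guard $\nil\,\code\N$ in $|\a|_c$ is compatible with the rules. The $\cst$/$\var$ case of $\le$, like your own lemma for it, presupposes $y\in H$ for the stored list $H$, and $\guard$ returns the stored list. Two concrete failures follow:
\begin{itemize}
\item $|\lmax\,(\lsucc\,\lzero)\,\a|_c$ reduces to $|\a|_c$.
\item All guards stay $\nil\,\code\N$, so $\addguard$ is the identity. Hence $|\limax\,\a\,\b|_c$ and $|\lmax\,\a\,\b|_c$ both reduce to $\dMax\,(\Add\,(\var\,(\nil\,\code\N)\,\u{\i_\a}\,\z)\,F)$, where $F=\cons\,\code\S\,(\var\,(\nil\,\code\N)\,\u{\i_\b}\,\z)\,(\nil\,\code\S)$.
\end{itemize}
Neither pair is equivalent. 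So ($\Leftarrow$) is false for Def.~\ref{def-trans-level} taken literally. The encoding must be changed to $\var\,(\cons\,\code\N\,\u{\i_\a}\,(\nil\,\code\N))\,\u{\i_\a}\,\z$, after which ``every $\var\,H\,y\,j$ has $y\in H$'' is a real invariant.

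\textbf{$\Add$.} Your claim that $\Add\,\w\,E$ is an antichain whenever $E$ is fails for the rule as stated. When $\w'\le\w$, the rule returns $\add\,\code\S\,\cmp_\S\,\w\,E$ without comparing $\w$ to the rest of $E$. Here is a counterexample, using the repaired encoding and $\i_\a<\i_\b<\i_\gamma$:
\begin{itemize}
\item Let $X=\lmax\,(\limax\,\a\,\b)\,(\limax\,\a\,\gamma)$. Then $|X|_c$ normalizes to the sorted encoding of $\lV\{\a,\b\}\,\a\,0$, $\lV\{\a,\gamma\}\,\a\,0$, $\lV\{\b\}\,\b\,0$, $\lV\{\gamma\}\,\gamma\,0$.
\item In $|\lmax\,\a\,X|_c$, the first of these lies below $\lV\{\a\}\,\a\,0$, so the middle branch fires. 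Then $\lV\{\a,\gamma\}\,\a\,0\preceq\lV\{\a\}\,\a\,0$ survives in the result.
\item By contrast, $|\lmax\,X\,\a|_c$ normalizes to the encoding of the GNF $\lMax\{\lV\{\a\}\,\a\,0,\lV\{\b\}\,\b\,0,\lV\{\gamma\}\,\gamma\,0\}$.
\end{itemize}
These are two distinct normal forms of equivalent levels, so ($\Rightarrow$) fails too unless the middle branch becomes $\Add\,\w\,E$.

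The paper's proof glosses over both points. Its claim that the $\Sigma_I$ rules preserve $\K{\cdot}$ on arbitrary arguments, and its step (e), are exactly where these invariants are needed. With both repairs your invariants do hold. If $\w'\le\w$, no element of the antichain $E$ can dominate $\w$. Each sublevel of $F$ vanishes iff its guard contains a zero. Your plan then proves the theorem.
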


\begin{proof}
We give a detailed proof in \cite{blanqui26ictac-long}. It can be
summarized as follows. The rewrite system $\Sigma_U$ is weakly
orthogonal, hence confluent, and terminating. Hence, every Dedukti
term has a unique normal form. We then extend the semantics of level
expressions to $\Sigma_U$, and prove that every rule preserves it,
using equations proved by Géran \cite{geran26csl}. We then define a
function translating Dedukti normal forms back to $\o\cL$, and prove
that it is injective. The main difficulty is to come up with the right
definition of semantics to properly handle the replacement of $\a$ by
$\o{\i_\a}$.
\end{proof}


\section{Dedukti theory for universe polymorphism}
\label{sec-poly}

In the previous section, we defined a Dedukti theory to handle the
equational theory of universe levels, assuming that level variables
are encoded as natural numbers. However, this encoding does not allow
us to handle the typing rules for declaring and applying
universe-polymorphic constants where level variables need to be
abstracted over and substituted by arbitrary levels. To handle these
rules, we propose to use an hybrid encoding mixing both a shallow
embedding, where a level variable is represented by a Dedukti variable
to be able to abstract over and substitute it, and a deep embedding,
where a level variable is represented by a natural number to be able
to totally order level variables. To this end, we introduce two new
symbols:
\begin{itemize}
\item $\v$ which takes two arguments: the level variable itself
  (shallow encoding) and a natural number uniquely representing it
  (deep encoding),
\item $\dinst$ to mark a level term as an instantiation.
\end{itemize}
Morever, we add two rewrite rules on $\v$, which must be tried in the
given order. A term of the form $\v\,l\,i$, that is the translation of
a level variable, rewrites to $l'$ with the first rule if $l$ is of
the form $\dinst\,l'$, and to the GNF of $i$ otherwise.

\theory{poly}{universe polymorphism}{\\
\v\!:\!\L\!\to\!\N\!\to\!\L,~
\dinst\!:\!\L\!\to\!\L,~
\v\,(\dinst\,l)\,\_ \rw\,l,~
\v~\_~i \rw \dMax(\cons\,\code\S\,(\var\,(\nil\,\code\N)\,i\,\z)\,(\nil\,\code\S))
}

\vspace*{1mm}
The rewrite system for $\v$ is not confluent in general. However, it
is confluent if the first rule is always tried before the second one,
which is a reduction strategy supported by all Dedukti
checkers\hide{\footnote{Both in Dkcheck and Lambdapi, rules are declared by
blocks. In Dkcheck, blocks are tried sequentially and in some
unspecified order within a block. In Lambdapi, rules are tried in some
unspecified order except when the symbol is declared {\tt sequential}
in which case rules are tried in the order they are declared.}}.

\begin{definition}[Translation of universe levels]
  \label{def-trans-level-poly}
  The function $|\cdot|$ from levels to Dedukti terms of type $\L$ in
  $\Sigma_U,\Sigma_{poly}$ is defined as in Def.
  \ref{def-trans-level} except for level variables where we take
  $|\a|=\v\,\a\,\o{\i_\a}$.
\end{definition}

We check that equivalent levels are translated to equivalent terms:

\begin{theorem}
  \label{th-univ-poly-correct}
 If $l_1\simeq l_2$ then
 $\Sigma_U,\Sigma_{poly}\th_D|l_1|\equiv|l_2|$.
\end{theorem}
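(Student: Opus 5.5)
The plan is to reduce the statement to Theorem~\ref{th-univ-adequate}, using the second rewrite rule of $\Sigma_{poly}$ to turn each occurrence $\v\,\a\,\o{\i_\a}$ into the closed encoding $|\a|_c$. The key observation is that, in $|l|$, the first argument of every $\v$ is a Dedukti variable $\a$ of type $\L$. A variable is not of the form $\dinst\,l'$ and can never reduce to one. So the first rule of $\v$ never applies to such a subterm, and under the prescribed strategy (first rule tried first) the second rule fires:
\[\v\,\a\,\o{\i_\a}\rw\dMax(\cons\,\code\S\,(\var\,(\nil\,\code\N)\,\o{\i_\a}\,\z)\,(\nil\,\code\S)) = |\a|_c .\]
Even without relying on the strategy, this rewrite step is an instance of (rew), hence an equation of the Dedukti conversion in $\Sigma_U,\Sigma_{poly}$.

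First, I would prove by induction on $l$ that $\Sigma_U,\Sigma_{poly}\th_D |l|\equiv|l|_c$.
\begin{itemize}
\item For $l=\a$, this is the rewrite step above.
\item For $l=\lzero$, both sides are $\dzero$.
\item For $\lsucc$, $\lmax$ and $\limax$, the claim follows from the induction hypotheses by congruence of $\equiv$, since Definitions~\ref{def-trans-level} and~\ref{def-trans-level-poly} are compositional and agree on these constructors.
\end{itemize}

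Second, given $l_1\simeq l_2$, Theorem~\ref{th-univ-adequate} yields $\Sigma_U\th_D|l_1|_c\equiv|l_2|_c$. Conversion is monotone under extension of the theory: every rewrite rule of $\Sigma_U$ is still present in $\Sigma_U,\Sigma_{poly}$. Hence $\Sigma_U,\Sigma_{poly}\th_D|l_1|_c\equiv|l_2|_c$. Chaining with the first step by symmetry and transitivity gives $|l_1|\equiv|l_1|_c\equiv|l_2|_c\equiv|l_2|$.

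The main obstacle I expect is the justification that the whole combined system remains well behaved, because the two $\v$ rules overlap and are not confluent in general. If conversion is read as the equational closure of the rules, the argument above is purely equational and needs no confluence. If instead $\equiv$ must be witnessed by joinability, as in an implementation, I would argue that on terms of the form $|l|$ no subterm $\dinst\,\_$ ever occurs, since it is absent from the translation and introduced by no rule of $\Sigma_U$. On this fragment only the second $\v$ rule is applicable, and it is a closed-right-hand-side rule that commutes with $\Sigma_U$. The combined system, restricted to this fragment, is therefore still confluent and terminating, because it normalizes every $\v$ to a $\Sigma_U$-term and then behaves as in Theorem~\ref{th-univ-adequate}. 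The converse implication is not claimed, which is consistent with the fact that, after instantiation via $\dinst$, distinct variables may collapse.
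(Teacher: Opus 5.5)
Your proposal is correct and is the paper's argument: you show $\Sigma_U,\Sigma_{poly}\th_D|l|\equiv|l|_c$ for every level $l$ (the second rule of $\v$ on $\v\,\a\,\o{\i_\a}$, then congruence), transport the conversion $|l_1|_c\equiv|l_2|_c$ given by Theorem~\ref{th-univ-adequate} to the extended theory, and chain. Your closing confluence discussion goes beyond what is needed, since under the joinability reading $|l_i|\rw^*|l_i|_c$ together with the common $\Sigma_U$-normal form already gives a common reduct; also, calling the second $\v$ rule ``closed-right-hand-side'' is inaccurate because its right-hand side contains $i$, though neither point affects the proof.
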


\begin{proof}
  If $l_1\simeq l_2$ then, by Theorem \ref{th-univ-adequate},
  $\Sigma_U\th_D|l_1|_c\equiv|l_2|_c$. But, for all $l$,
  $\Sigma_U,\Sigma_{poly}\th_D|l|\equiv|l|_c$. Therefore,
  $\Sigma_U,\Sigma_{poly}\th_D|l_1|\equiv|l_2|$.
\end{proof}

\begin{definition}[Translation of terms]
  We extend Def. \ref{def-trans-pts-terms} by taking
  $|\G,x[\vec\a]\!:\!A| = |\G|,x\!:\!(\Pi\vec\a\!:\!\L,||A||_\G)$
  and $|x[l_1,\dots,l_n]|^\chi_\G = x\,(\dinst\,|l_1|)\,\dots\,(\dinst\,|l_n|)$.
\end{definition}

For instance, with $\G=(\id[\a]\!:\!(\Pi A\!:\!\U\a,A\!\to\!A),
A\!:\!\U\lzero, x\!:\!A)$, $t=\id[\lzero]A x$ and $\i_\a=0$, the
translation gives (modulo some rewriting):
$|\G|=(\id\!:\!(\Pi\a\!:\!\L,\Pi
A\!:\!\U(\v\a\z),\e(\v\a\z)A\to\e(\v\a\z)A), A\!:\!\U(\dzero),
x\!:\!\e(\dzero)A)$, $|t|_\G=\id(\dinst(\dzero))A x$ and
$||A||_G=\e(\dzero)A$. We have $\G\th t:A$ in Lean and
$\Sigma,|\G|\th_D|t|_\G:||A||_\G$ in Dedukti because
$\id(\dinst(\dzero))$ is of type $\Pi A\!:\!\U(\v(\dinst(\dzero))\z)$,
$\e(\v(\dinst(\dzero))\z)A\to\e(\v(\dinst(\dzero))\z)A$ and
$\v(\dinst(\dzero))\z\rw\dzero$ by the first rule of $\v$.


\section{Dedukti theory for the other Lean's constructions}

The translation is defined on well-typed terms only but, because of
linearization, the left-hand side of a rewrite rule may not be
typable. Typability can be recovered by undoing linearisation
though. Hence, we can translate every Lean rewrite rule $l\rw r$ into
the Dedukti rule $|l\rho|^\chi_\G\rho^{-1}\rw|r|^\chi_\G$ where $\rho$
denotes linearization and $\G$ is an environment in which $l\rho$ is
typable. For more details on rule linearisation, see \cite{blanqui05mscs}.

\vspace*{1mm}
{\bf Recursors.} Instead of a general definition, we show how the
translation works on polymorphic vectors.
\hide{
$$\begin{array}{rl}
  \tVec[\a]: & \U\a\to\tN\to\U\a\\
  \tnil[\a]: & \Pi A\!:\!\U\a,\tVec[\a]A\tz\\
  \tcons[\a]: & \Pi A\!:\!\U\a,A\to\Pi n\!:\!\tN,\tVec[\a]An\to\tVec[\a]A(\ts n)\\
  \rec_\tVec[\a\b]: & \Pi A\!:\!\U\a,\Pi Q\!:\!\Pi n\!:\!\tN,\tVec[\a]An\to\U\b,\\
  &Q\tz(\tnil[\a]A)\to\\
  &(\Pi x\!:\!A,\Pi n\!:\!\tN,\Pi v\!:\!\tVec[\a]An,Qnv\to Q(\ts n)(\tcons[\a]xnv))\to\\
  &\Pi n\!:\!\tN,\Pi v\!:\!\tVec[\a]An,Qnv\\
\end{array}$$
}
The equations satisfied by the recursor are:
$$\begin{array}{r@{~\equiv~}l}
  \rec_\tVec[\a\b]AQfg\,\tz\,(\tnil[\a]A) & f\\
  \rec_\tVec[\a\b]AQfg\,(\ts n)\,(\tcons[\a]Axnv) & gxnv(\rec_\tVec[\a\b]AQfg\,nv)\\
\end{array}$$
The (left-linear) rules actually used in Lean are:
$$\begin{array}{r@{~\rw~}l}
  \rec_\tVec[\a\b]AQfg\,\_\,(\tnil[\a']A') & f\\
  \rec_\tVec[\a\b]AQfg\,\_\,(\tcons[\a']A'xnv) & gxnv(\rec_\tVec[\a\b]AQfg\,nv)\\
\end{array}$$
which we translate to Dedukti as:
$$\begin{array}{r@{~\rw~}l}
  \rec_\tVec\a\b AQfg\,\_\,(\tnil\a' A') & f\\
  \rec_\tVec\a\b AQfg\,\_\,(\tcons\a' A'xnv) & gxnv(\rec_\tVec\a\b AQfg\,nv)\\
\end{array}$$

{\bf Quotients.} Similarly, the rewrite rule for quotients is
translated to:
$$\lift\,\a\,\b\,A\,R\,B\,f\,h\,(\class\,\a'\,A'\,R'\,a)\rw f\,a$$

{\bf Primitive projections} are translated to
applications of actual projection functions as follows. If
$S[\va]:\Pi\vp\!:\!\vA,\U l$ is a struct-like inductive type with
unique constructor
$\mk_S[\va]:\Pi\vp\!:\!\vA,\Pi\vx\!:\!\vB,S[\va]\vp$, and $\G\th
v:S[\vl]\vt$, then:
$$|v.i|_\G=\prj_S^i\,(\dinst\,|\vl|)\,|\vt|_\G\,|v|_\G$$
where $\prj_S^i\!:\!\Pi\va\!:\!\L,||\Pi\vp\!:\!\vA,S[\va]\vp\to
B_i\sigma||^\chi_\D$, $\D$ is the context in which $S$ is defined and
$\sigma$ is like in Sec. \ref{sec-prj}. Moreover, we add the
(left-linear) rules:
$$\prj_S^i\,\va\,\vp\,(\mk_S\,\va'\,\vp'\,\vx)\rw x_i$$
Then, we can translate the rule for recursors on struct-like
inductive types by:
$$\rec_S\,\va\,\b\,\vp\,Q\,h\,x\rw h\,(\prj_S^1\,\va\,\vp\,x)\,\dots\,(\prj_S^n\,\va\,\vp\,x)$$
Finally, to take care of ($\eta$-proj), we can add the (non-left-linear) rule\\
\hspace*{2.7cm}$\mk_S\,\va\,\vp\,(\prj_S^1\,\va^1\,\vp^1\,x)\,\dots\,(\prj_S^n\,\va^n\,\vp^n\,x)\rw x$\\
But a builtin handling like for ($\eta$-fun) would be better.


\section{Soundness of the translation}
\label{sec-sound}

The soundness proof below crucially relies on Lean$^-$ satisfying the
following two non-trivial properties:
\begin{itemize}
\item convertibility of types: $\G\th t:A$ and $\G\th t:B$ imply
  $\G\th A\equiv B:C$,
\item injectivity of $\U$: $\G\th\U l\equiv\U l':C$ implies $l\simeq
  l'$.
\end{itemize}

These properties hold in the absence of $\eta$-rules since, in this
case, the untyped rewrite relation $\hookrightarrow$ obtained by
orienting equations from left to right is orthogonal and thus
confluent \cite{klop93tcs,oostrom94lfcs}, and $\equiv$ is equal to the
joinability relation $\hookrightarrow^*{}^*\hookleftarrow$ (two terms
are equivalent iff they have a common reduct).

However, the combination of ($\b$-fun) and ($\eta$-fun) is not
confluent on untyped terms with type-annotated abstractions
\cite{nederpelt73phd}, and the combination of ($\b$-fun), ($\b$-proj)
and ($\eta$-proj) is not confluent on untyped terms
\cite{klop80phd}. Confluence of these combinations can be recovered by
using conditional rewriting \cite{devrijer89lics} or simply typed
terms \cite{pottinger81ndj}. This later result has been extended to
($\eta$-unit) and polymorphism in \cite{curien96jfp}, and to dependent
types with one universe in \cite{adjedj24cpp}. Finally, in
\cite{carneiro26draft}, convertibility of types and injectivity of
$\U$ are (partially formally) proved for a (non-terminating) type
theory with ($\beta$-fun), ($\eta$-fun), ($\beta$-proj),
($\eta$-proj), ($\eta$-unit), (PI), the inductive types $\bN$ and
$\Eq$, and a general fixpoint combinator. Some work is ongoing to
(formally) prove that the above properties hold for the whole Lean
type theory \cite{carneiro25draft} (the proof given in
\cite{carneiro19master} is incorrect [personal communication from the
  author]).

\hide{
Dedukti enjoys the untyped version of the last property if the
combination of rewrite rules and $\b$-reduction is confluent and
rewrite rules preserve typing \cite[Lemma 41]{blanqui01phd}.
}

\begin{theorem}\label{th-sound}
  For all sorting function $\chi$, if $\G\th t:A$ in Lean$^-$, then
  $\Sigma,|\G|^\chi\th_D|t|^\chi_\G:||A||^\chi_\G$, where $\Sigma$ is
  the extension of $\Sigma_U,\Sigma_{poly},\Sigma_P$ with the
  translation of Lean's axioms and user-defined inductive types and
  recursors.

  Moreover, for all other sorting function $\chi'$, we have
  $|\G|^{\chi'}\cC_\leftrightarrow^*|\G|^\chi$,
  $|t|^{\chi'}_\G\cM_\leftrightarrow^*|t|^\chi_\G$ and
  $||A||^{\chi'}_\G\cM_\leftrightarrow^*||A||^\chi_\G$, where
  $\cdot^*$ is the reflexive-transitive closure and $\leftrightarrow$
  replaces a subterm $|l|$ by $|l'|$ if $l\simeq l'$.
\end{theorem}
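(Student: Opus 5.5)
The proof goes by induction on the derivation of $\G\th t:A$ in Lean$^-$. We strengthen the statement so that it also covers the other judgment forms. For context formation $\th\G$ we show that $\Sigma,|\G|^\chi$ is a well-formed Dedukti context. For conversion $\G\th t\equiv u:A$ we show $\Sigma,|\G|^\chi\th_D|t|^\chi_\G\equiv|u|^\chi_\G$.

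We prove the second part of the theorem first, by a separate structural induction on $t$. It says that changing the sorting function only changes translations up to $\leftrightarrow$. The translation only uses $\chi$ to insert level terms $|\chi_\G(A)|$ in the $\pi$ and $\e$ annotations. By convertibility of types, if $\G\th A:\U l$ and $\G\th A:\U l'$ then $\G\th\U l\equiv\U l':C$. By injectivity of $\U$ this gives $l\simeq l'$. So each replacement of $\chi$ by $\chi'$ swaps some $|l|$ for $|l'|$ with $l\simeq l'$, which is exactly a $\leftrightarrow$ step. The same argument applies to contexts and to $||A||$. By Theorem~\ref{th-univ-poly-correct}, $\leftrightarrow\ \subseteq\ \equiv$ in Dedukti, so $\cM^*_\leftrightarrow$ and $\cC^*_\leftrightarrow$ are included in Dedukti conversion. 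The same reasoning shows that two translations of $A$ computed in convertible contexts, or under different level choices, are convertible. This fact is used repeatedly in the first part.

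For the first part we need three auxiliary lemmas, each proved by induction on terms.
\begin{enumerate}
\item \emph{Substitution.} $|t[x/u]|_\G\equiv|t|_{\G,x:B}[x/|u|_\G]$, modulo the choice of $\chi$ handled above.
\item \emph{Level instantiation.} $|t|[\va/\dinst\,|\vl|]\equiv|t[\va/\vl]|$. It holds because every occurrence of a level variable is translated as $\v\,\a\,\o{\i_\a}$, and $\v\,(\dinst\,l)\,\_\rw l$ fires before the second rule of $\v$.
\item \emph{Weakening.} Translations are stable under extending the context.
\end{enumerate}

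The typing cases then go as follows.
\begin{itemize}
\item (var), (decl), (empty) and (var') are immediate. For universe-polymorphic variables we use the level-instantiation lemma.
\item (sort): $\sort\,|l|:\U(\dsucc|l|)$, and $||\U(\lsucc l)||=\e\,m\,(\sort|\lsucc l|)\rw\U(\dsucc|l|)$, where $m$ is the chosen sort level.
\item (prod): the type of $\pi$ yields $\U(\dimax\,|\chi(A)|\,|\chi(B)|)$. We match this against the chosen sort of $\Pi x:A,B$ using convertibility of types and injectivity of $\U$, then Theorem~\ref{th-univ-poly-correct}.
\item (lam) and (app): the rule $\e\,\_\,(\pi\ldots)\rw\Pi x\!:\!\e\,\a\,t,\e\,\b\,(t'\,x)$ turns $||\Pi x:A,B||$ into a Dedukti product. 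For (app) we also need the substitution lemma.
\item (conv): this follows from the equality part of the induction.
\end{itemize}

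The equality cases are as follows.
\begin{itemize}
\item ($\b$-fun) and ($\eta$-fun) are handled by the corresponding Dedukti conversions.
\item (lvl) follows from Theorem~\ref{th-univ-poly-correct}.
\item (def) is handled by translating definitions to Dedukti definitions, that is, to rewrite rules on the declared constant, together with level instantiation.
\item The congruence rules follow by induction.
\item ($\b$-rec), ($\b$-quot), ($\b$-proj), ($\b$-rec-S) and ($\eta$-proj) are each instances of the corresponding translated rewrite rules of $\Sigma$. Here the translated left-hand side matches $|l\theta|$ up to $\leftrightarrow$ and up to the linearization discussed above.
\end{itemize}

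The main obstacle is the interplay between typed Lean conversion and the non-canonical annotations introduced by $\chi$. Sorts, substitution, and the linearized rules all produce terms whose level annotations differ from those the translation would choose directly. Showing that these differences always lie within $\cM^*_\leftrightarrow$ is the crux. I will handle it by first proving the second part of the theorem, so that every such mismatch can be closed by Theorem~\ref{th-univ-poly-correct}. Both this step and the (prod) case depend essentially on the assumed convertibility of types and injectivity of $\U$.
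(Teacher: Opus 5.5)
Your plan has the same overall shape as the paper's proof, but one ingredient is missing: an argument that the \emph{translated types are themselves well-typed} in Dedukti. Dedukti's (conv) rule only lets you move to a type that is typable by a sort, and its (lam) rule needs the product to be typable. So nearly every case needs facts such as $|\G|\th_D||A||_\G:\Type$ and $|\G|\th_D\Pi x\!:\!||A||_\G,||B||_{\G,x:A}:\Type$, and the paper devotes facts (a)--(f) to them. For instance, (c) and (e) use $(*)$ to retype $|A|_\G$ from $||\U l||_\G$ to $\U|\chi_\G(A)|$. Without this, $\e\,|\chi_\G(A)|\,|A|_\G$ is not even typable, so your (decl) case is not immediate.

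As you set it up, the induction actually fails to close at (app). You get $|t|_\G\,|u|_\G:||B||_{\G,x:A}[x/|u|_\G]$ and want to convert this to $||B[x/u]||_\G$ via the substitution lemma. That conversion requires $|\G|\th_D||B[x/u]||_\G:\Type$, i.e.\ the induction hypothesis for $\G\th B[x/u]:s$. This judgment is derivable, but it is not a sub-derivation of the (app) instance. The paper resolves this with fact (j): the typing relation is unchanged if (app) is replaced by (app'), which carries $\G\th B[x/u]:s$ as an extra premise (and similarly (proj) by (proj')). The induction is then run on derivations using (app').

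Your (conv) case has a related gap. The equality part of your induction only gives $|A|_\G\deq|B|_\G$. You also need $\chi_\G(A)\simeq\chi_\G(B)$, which follows from $(*)$ via the common sort $s$ (fact (h)). Your $\chi$-independence lemma does not cover this, since it concerns a single term under two sorting functions. You additionally need $||B||_\G:\Type$, from the hypothesis on $\G\th B:s$.

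Apart from this, your route is the paper's. It is an induction in the style of Cousineau--Dowek, with a substitution lemma (fact (g)) and weakening (facts (k), (l)). It uses $(*)$ and Theorem~\ref{th-univ-poly-correct} to reconcile the levels chosen by $\chi$. Proving $\chi$-independence first and reusing it only reorganises what the paper does inline in (c), (g), (h), (k), (l) and in the (prod) and (cong$\Pi$) cases. Your level-instantiation lemma and your treatment of polymorphic (var), (def) and (lvl) go beyond the paper's written proof, which leaves those cases implicit.
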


\begin{proof}
  We give a detailed proof in \cite{blanqui26ictac-long}. It
  essentially follows Cousineau and Dowek's proof for functional PTSs
  \cite{cousineau07tlca}, with some extra care to properly handle the
  fact that the sorts of a term are not equal anymore but equivalent.
\end{proof}


\section{Related work}

In the pioneering work at the origin of Dedukti
\cite{cousineau07tlca,cousineau09phd}, Cousineau and Dowek defined a
translation to Dedukti of functional pure type systems with universe
constants and type convertibility modulo $\equiv_\b$ (PTS/$\b$).

In \cite{boespflug12pxtp-coqine}, Boespflug and Burel developed the
first version of CoqInE, a (partial) translator from Rocq to Dedukti,
but collapsed all universes (so they actually encode the inconsistent
and non-terminating system $\l *$ with $\Type\!:\!\Type$). However
they provide an encoding of Rocq inductive types and recursive
functions.

In \cite{assaf15phd}, Assaf extended Cousineau and Dowek's work by
defining a translation to Dedukti of cumulative\footnote{Cumulativity
adds a subtyping relation on sorts.} type systems with universe
constants and type convertibility modulo $\equiv_\b$ (CTS/$\b$) having
the ``local minimum property''. Assaf also sketched an encoding of
$\Lambda^\infty_i/\b$, where universes are natural numbers and $0$ is
impredicative, extended with cumulativity (absent in Lean), and
developed Krajono, a (partial) translator from Matita
\cite{asperti11cade} to Dedukti.

In \cite{thire20phd}, Thiré remarked that Assaf' soundness proof is
actually incorrect, and proved that the translation is sound for
``normal'' CTS/$\b$.
In \cite{felicissimo24fscd}, Felicissimo and Winterhalter extended
Assaf's work on $\Lambda^\infty_i/\b$ to product covariance.

In \cite{ferey21phd}, Férey extended Assaf's work on
$\Lambda^\infty_i/\b$ with universe polymorphism and explicit universe
constraints, hence extending the type convertibility relation with an
equivalence on universe expressions ($\Lambda^\infty_i/\b\!\simeq$),
and developed the current version of CoqInE.
However, although Férey considers the sort $\Prop$ to be
impredicative, its equivalence relation on universe levels does not
include Lean's $\imax$ operator. Moreover, the soundness relies on a
number of assumptions satisfied by Rocq but only after some
pre-processing \cite{herbelin05draft}.
\hide{
Note also that, in contrast with previous works, Férey defines a
translation on triples $(\G,t,\pi)$, where $\pi$ is a typing
derivation of $t$ in $\G$, and proves that the result does not depend
on $\pi$. Here, we define a translation of $(\G,t,\chi)$ where $\chi$
is a sorting function and prove that the choice of that function does
not matter (as long as convertibility of types holds).
}

In \cite{genestier20fscd}, Genestier provided an encoding to Dedukti
of PTS/$\b$ with universe polymorphism, a convergent rewrite system
for deciding the equivalence of universe expressions with variables
and $\max$ (but not $\imax$), and a (partial) translator from Agda to
Dedukti.
In \cite{assaf16hatt}, Assaf {\em et al\,} considered the case of
universe expressions with variables and $\imax$ (called {\it rule} in
their work) but the proposed rewrite system is incomplete wrt the
equational theory of $\imax$.
A complete theory was finally given by Géran in \cite{geran26csl}.

This work differs from those previous works in several aspects. This
is the first work considering the type theory of Lean and providing a
prototype translator. This extends Genestier's work
\cite{genestier20fscd} from $\Lambda^\infty_p/\b$ to
$\Lambda^\infty_i/\b\!\simeq$. We provide a rewrite system computing
Géran's normal form for universe expressions with $\imax$
\cite{geran26csl}, and prove its correctness. And we introduce a
simple Dedukti theory for handling the instantiation of
universe-polymorphic definitions. Since Lean does not feature universe
cumulativity, which breaks convertibility of types, we can get rid of
the complexities necessary for handling cumulativity. In particular,
we can define our translation on terms like Cousineau and Dowek
(modulo a sorting function), and not on typing derivations like in the
other works.

\section{Conclusion and future work}

In \cite{vaishnav25ictac}, the second author showed how to translate
Lean to a simpler theory called Lean$^-$. Here, we defined a
translation of Lean$^-$ to a theory in the Dedukti logical framework,
that is sound for all sub-systems in which a term has a unique type
modulo convertibility, which is the topic of some recent work
\cite{carneiro26draft}.

We started to implement that translation in
\url{https://github.com/Deducteam/lean2dk} as a fork of the mirror
implementation of Lean in Lean called Lean4Lean
\cite{carneiro25draft}. While our translation shows some promising
preliminary results on the translation of specific test files and
small mathematical libraries, more work needs to be done to handle
Lean's handling of native (bignum) arithmetic operators and scale to
larger libraries.


%
%
\bibliographystyle{splncs04}
\bibliography{main}


\newpage
{\bf Proof of Theorem \ref{th-univ-adequate}:}
\begin{proof}
  First note that all the rewrite rules of $\Sigma_U$ are left-linear (no
  rule left hand-side contains two occurrences of the same free
  variable) and there is only one critical pair that is trivially
  joinable: $\bot\,\et\,\bot$ can be rewritten with two different
  rules but the result is the same, namely $\bot$. $\Sigma_U$ is therefore
  weakly orthogonal and thus confluent \cite{oostrom94lfcs}. The
  critical pair could be removed and the system turned into a fully
  orthogonal rewrite system by replacing the rules responsible of the
  critical pair by $\bot\,\et\,\bot\rw\bot$, $\bot\,\et\,\top\rw\bot$,
  $\top\,\et\,\bot\rw\bot$. The weakly orthogonal system is however
  slightly more efficient as some subterms need not be matched upon in
  order to apply a rewrite rule \cite{hondet20fscd}.
  
  Next, note that the rewrite rules form a terminating rewrite system
  as all recursive calls are done on structurally smaller arguments
  \cite{blanqui19fscd}. Therefore, every Dedukti term in $\Sigma_U$
  has a unique normal form. We now need to prove that rewriting
  preserves the semantics but, since rewrite rules are defined on an
  extended algebra, we first need to define the semantics of new
  symbols.

  For all $\b$-normal Dedukti terms $t$ typable in $\Sigma_U$, we
  define its interpretation $\K{t}_\xi$ compositionally by defining,
  for each function symbol $f\!:\!A\in\Sigma_U$, an interpretation
  function $\o{f}\in\o{A}$ where:
  \begin{itemize}
  \item $\o\Type$ is a set big enough to include all the sets below.
    
  \item $\o\T$ is the smallest set
    $X=\{\code\C,\code\B,\code\N,\code\S\}\cup\{\set\,a\mid a\in
    X\}$.
    
  \item $\K{\Pi x:A,B}_\xi=\Pi a\in\K{A}_\xi,\K{B}_{\xi,(x,a)}$ where,
    given a set $A$ and an $A$-indexed family of sets $(B_a)_{a\in
      A}$, $\Pi a\!\in\!\!A,B_a$ is the set of functions $f$ from $A$ to
    $\bigcup\{B_a\mid a\in\!A\}$ such that, for all $a\in\!A$,
    $f(a)\!\in\!B_a$. In particular, $\o{A\to B}$ is the set of
    functions from $\o{A}$ to $\o{B}$.
    
  \item $\o\tau$ is defined as follows:
    \begin{itemize}
    \item $\o\tau\,\code\C=\o\C=\{\lt,\eq,\gt\}$,
    \item $\o\tau\,\code\B=\o\B=\{\top,\bot\}$,
    \item $\o\tau\,\code\N=\o\N$ is the smallest set
      $X=\{\z\}\cup\{\s\,x\mid x\in X\}$,
    \item $\o\tau\,\code\S=\o\S=\{\cst\,G\,i\mid G\in\o\typG,i\in\o\N\}\cup\{\lV\,G\,m\,i\mid G\in\o\typG,m\in\o\N,i\in\o\N\}$,
    \item $\o\tau\,(\set\,a)$ is the smallest set
      $X=\{\nil\,a\}\cup\{\cons\,a\,x\,l\mid x\in\o\tau\,a,l\in X\}$.
    \end{itemize}
  \end{itemize}
  \noindent
  A function symbol $f\in\Sigma_U-\Sigma_I$\footnote{$\Sigma_I$ is
  defined in Fig. \ref{fig-univ}, p. \pageref{fig-univ}.} defined by
  rules $\{l_j\rw r_j\mid j\in J\}$ is interpreted by the unique
  primitive recursive function $\o{f}$ such that, for all valuations
  $\xi$ and $j\in J$, $\K{l_j}_\xi=\K{r_j}_\xi$. This function is
  total and unique because recursive calls are done on structurally
  smaller arguments, rule left-hand sides are non-overlapping (if we
  use the orthogonal version of the system) and cover all possible
  cases. By unfolding the definition of each $\K{r_j}$, this amounts
  to take:
  \begin{itemize}
  \item $\o{\code\C}=\code\C$, $\o\lt=\lt$, $\o\eq=\eq$,
    $\o\gt=\gt$,\\
    $\o\case\,\_\lt\,x\,\_\,\_=x$,
    $\o\case\,\_\eq\,\_\,x\,\_=x$, $\o\case\,\_\gt\,\_\,\_\,x=x$.

  \item $\o{\code\B}=\code\B$, $\o\top=\top$, $\o\bot=\bot$,
    $\top\,\o\et\,\top=\top$,
    $\bot\,\o\et\,x=\bot$, $x\,\o\et\,\bot=\bot$,
    
    $\o\si\,\top\,x\,\_=x$, $\o\si\,\bot\,\_\,x=x$.

  \item $\o{\code\N}=\code\N$, $\o\z=\z$, $\o\s\,x=\s\,x$,
    $\o\cmp_\bN\,\z\,\z=\eq$, $\o\cmp_\bN\,(\s\,x)\,\z=\gt$,
    $\o\cmp_\bN\,\z\,(\s\,y)=\lt$,\\
    $\o\cmp_\bN\,(\s\,x)\,(\s\,y)=\o\cmp_\bN\,x\,y$,
    $x\,\o\le_\bN\,y=\o\case\,\o{\code\B}\,(\o\cmp_\bN\,x\,y)\,\o\top\,\o\top\,\o\bot$.

  \item $\o\set\,a=\set\,a$,
    $\o\nil\,a=\nil\,a$,
    $\o\cons\,a\,x\,l=\cons\,a\,x\,l$,\\
    $\o\fold~a~b~f~(\nil~\_)~y = y$,
    $\o\fold~a~b~f~(\cons~\_~x~S)~y = \o\fold~a~b~f~S~(f~x~y)$,\\
    $\o\add~a~\_~x~(\nil~\_) = \o\cons~a~x~\nil$,\\
    $\o\add~a~f~x~(\o\cons~a~y~S\as T) = \o\case~(\o\set~a)~(f~x~y)~(\o\cons~a~x~T)~T~(\o\cons~a~y~(\o\add~a~f~x~S))$,\\
    $\o\union~\_~\_~(\nil~\_)~T = T$,
    $\o\union~a~f~(\cons~x~S)~T = \o\add~a~f~x~(\o\union~a~f~S~T)$,\\
    $\o\incl~\_~\_~(\nil~\_)~\_ = \o\top$,
    $\o\incl~\_~\_~(\cons~\_~\_~\_)~(\nil~\_) = \o\bot$,\\
    $\o\incl~a~f~(\cons~\_~x~S'\as S)~(\cons~\_~y~T') = \o\case~\code\B~(f~x~y)~\o\bot~(\o\incl~a~f~S'~T')~(\o\incl~a~f~S~T')$,\\
    $\o\cmpset~\_~\_~(\nil~\_)~(\nil~\_) = \o\eq$,\\
    $\o\cmpset~\_~\_~(\nil~\_)~(\cons~\_~\_~\_) = \o\lt$,
    $\o\cmpset~\_~\_~(\cons~\_~\_~\_)~(\nil~\_) = \o\gt$,\\
    $\o\cmpset~a~cmp~(\cons~a~x~S)~(\cons~a~y~T) = \o\case~\o{\code\C}~(cmp~x~y)~\o\lt~(\o\cmpset~a~cmp~S~T)~\o\gt$.

  \item $\o{\code\typG}=\code\typG$, $\o\cmp_\typG=\o\cmpset~\code\N~\o\cmp_\bN$.

  \item $\o{\code\S}=\code\S$, $\o\var~G~m~i=\var~G~m~i$, $\o\cst~G~i=\cst~G~i$,\\
    $\o\cmp_\S~(\cst~\_~\_)~(\var~\_~\_~\_)=\o\lt$,
    $\o\cmp_\S~(\var~\_~\_~\_)~(\cst~\_~\_)=\o\gt$,\\
    $\cmp_\S~(\cst~G~i)~(\cst~H~j) = \o\case~\o{\code\C}~(\o\cmp_\typG~G~H)~\o\lt~(\o\cmp_\bN~i~j)~\o\gt$,\\
    $\o\cmp_\S~(\var~G~x~i)~(\var~H~y~j) = \o\case~\o{\code\C}~(\o\cmp_\typG~G~H)~\o\lt~(\o\case~\o{\code\C}~(\o\cmp_\bN~x~y)~\o\lt~(\o\cmp_\bN~i~j)~\o\gt)~\o\gt$,\\
    $\o\guard~(\var~G~\_~\_)= G$, $\o\guard~(\cst~G~\_)= G$,\\
    $\o\addguard~(\cst~G~i)~s = \o\cst~(\o\union~\o{\code\N}~G~(\o\guard~s))~i$,\\
    $\o\addguard~(\var~G~x~i)~s = \o\var~(\o\union~\o{\code\N}~G~(\o\guard~s))~x~i$,\\
    $\o\incr~(\var~G~x~i) = \o\var~G~x~(\s~i)$,
    $\o\incr~(\cst~G~i) = \o\cst~G~(\s~i)$,\\
    $\var~\_~\_~\_~\o\le~\cst~\_~\_ = \bot$,
    $\cst~G~i~\o\le~\cst~H~j = \o\incl~\o{\code\N}~H~G~\o\et~i~\o\le_\bN~j$,\\
    $\cst~G~i~\o\le~\var~H~\_~j = \o\incl~\o{\code\N}~H~G~\o\et~i~\o\le_\bN~\s~j$,
    $\var~G~x~i~\o\le~\var~H~y~j = \o\incl~\o{\code\N}~H~G~\o\et~x~\o{=}_\bN~y~\o\et~i~\o\le_\bN~j$.
  \end{itemize}
  \noindent
  Hence, for all rules $l\rw r$ defining a symbol
  $f\in\Sigma_U-\Sigma_I$, and for all valuations $\xi$, we have
  $\K{l}_\xi=\K{r}_\xi$. Therefore, by compositionality, for all
  $\b$-normal terms $t$ and $u$ with a type distinct from $\L$, and
  for all valuations $\xi$, if $t\rw u$, then $\K{t}_\xi=\K{u}_\xi$,
  by induction on the depth of the rewriting step. Indeed, if $t_1\rw
  t_2$, then $t_i=C[x/l_i\sigma]$, where $l_1\rw l_2$ is a rewrite
  rule, $\sigma$ is a substitution, and $C$ is a term with a unique
  occurrence of a fresh variable $x$. Since the interpretation is
  defined by induction on the structure of terms, we have
  $\K{t_i}_\xi=\K{C}_{\xi_i}$ where $\xi_i(y)=\K{l_i\sigma}_\xi$ if
  $y=x$, and $\xi(y)$ otherwise. But,
  $\K{l_i\sigma}_\xi=\K{l_i}_{\xi\circ\sigma}$. Therefore, since
  $\K{l_1}_{\xi\circ\sigma}=\K{l_2}_{\xi\circ\sigma}$, we have
  $\K{t_1}_\xi=\K{t_2}_\xi$.

  We now define the interpretation of symbols of $\Sigma_I$ as follows:
  \begin{itemize}
  \item $\o\L=(\bN\to\bN)\to\bN$,
  \item $\o\dzero(\mu)=0$,
  \item $\o\dsucc(x)(\mu)=x(\mu)+1$,
  \item $\o\dmax(x,y)(\mu)=\max(x(\mu),y(\mu))$,
  \item $\o\dimax(x,y)(\mu)=\imax(x(\mu),y(\mu))$,
  \item $\o\dMax(\nil~\_)(\mu)=0$,
  \item
    $\o\dMax(\cons~\_~\w~E)(\mu)=\max(\I{\w^*}_{\mu\circ\i},\o\dMax(E)(\mu))$,
  \end{itemize}
  \noindent
  where $\cdot^*$ translates normal Dedukti terms of type $\L$ back to
  $\o\cL$ as follows:
  \begin{itemize}
  \item $(\dMax\,E)^*=\lMax\,E^*$,
  \item $(\nil\,\_)^*=\emptyset$ and $(\cons\,\_\,x\,l)^*=\{x^*\}\cup l^*$,
  \item $(\cst\,G\,i)^*=\lC\,G^*\,\o{i}$
    and $(\var\,G\,m\,i)^*=\lV\,G^*\,m^*\,\o{i}$,
  \item $m^*=\i^{-1}(\o{m})$,
  \item $\o\z=0$ and $\o{\s\,i}=\o{i}+1$ ($\o\cdot$ is the inverse
    of $\u\cdot$).
  \end{itemize}
  Then, we can prove {\bf(a)} for all $l\in\cL$,
  $\I{l}_\xi=\K{|l|_c}(\xi\circ\i^{-1})$, by induction on $l$.

  We then check that, for all rules $l\rw r$ on the function symbols
  of $\Sigma_I$, and valuations $\xi$, we have $\K{l}_\xi=\K{r}_\xi$. The
  correctness of the rule for $\lsucc$ follows from \cite[Proposition
    42]{geran26csl}. The correctness of the rule for $\lmax$ follows
  from \cite[Proposition 43]{geran26csl}. The correctness of the rule
  for $\limax$ follows from \cite[Proposition 44]{geran26csl} and the
  fact that $\J{w_1}~\o\le~\J{w_2}$ iff $w_1^*\preceq w_2^*$ by
  \cite[Theorem 29]{geran26csl}.

  Hence, by compositionality, we have {\bf(b)} for all $\b$-normal
  terms $t$ and $u$ typable in $\Sigma_U$ and valuations $\xi$, if
  $t\rw u$ then $\K{t}_\xi=\K{u}_\xi$.

  We now prove {\bf(c)} for all closed normal terms $t$ of type $\L$,
  $\K{t}(\mu)=\I{t^*}_{\mu\circ\i}$, by induction on $t$. First,
  $\K{\dMax(\nil\,\_)}(\mu)\!=\!0$ and
  $\I{(\dMax(\nil\,\_))^*}_{\mu\circ\i}\!=\!\I{\lMax\,\emptyset}_{\mu\circ\i}\!=\!0$.
  Second,
  $\K{\dMax(\cons\,\_\,\w\,E)}(\mu)=\max(\I{\w^*}_{\mu\circ\i},\K{\dMax\,E}(\mu))$
  and\\
  $\I{(\dMax(\cons\,\_\,\w\,E))^*}_{\mu\circ\i}=\I{\lMax(\{\w^*\}\cup
    E^*)}_{\mu\circ\i}=\max(\I{\w^*}_{\mu\circ\i},\I{(\lMax\,E)^*}_{\mu\circ\i})$.

  Therefore, we have {\bf(d)} for all $l\in\cL$, if $|l|_c\rw^*t$ and $t$ is
  in normal form, then $l\simeq t^*$ since
  $\I{l}_\xi=\K{|l|_c}(\xi\circ\i^{-1})=\K{t}(\xi\circ\i^{-1})=\I{t^*}_{\xi\circ\i^{-1}\circ\i}=\I{t^*}_\xi$
  by (a), (b) and (c) respectively.

  This allows us to prove the right-to-left direction of the
  theorem. Indeed, assume that $l_1$ and $l_2$ are two terms in $\cL$
  such that $\Sigma_U\th|l_1|_c\equiv|l_2|_c$. Let $t$ be the normal form of
  $|l_1|_c$ and $|l_2|_c$. By (d), we have $l_1\simeq t^*\simeq l_2$.

  We now prove the left-to-right direction.
  
  First note that we have {\bf(e)} if $t$ is in normal form, then
  $t^*$ is a GNF.

  Next, we extend $|\cdot|_c$ on $\o\cL$ as follows:
  \begin{itemize}
  \item $|\lMax\,E|_c=\dMax|E|_\E$,
  \item $|\emptyset|_\E=\nil\,\code\S$,
  \item $|E|_\E=\cons\,\code\S\,|\w|_\S\,|E-\w|_\E$ if $E\neq\emptyset$,
    where $\w$ is the smallest element of $E$ wrt the total ordering
    $\w_1\le\w_2$ such that $\K{\cmp_\S\,|\w_1|_\S\,|\w_2|_\S}=\top$,
  \item $|\lC\,G\,i|_\S=\cst\,|G|_\typG\,\u{i}$,
  \item $|\lV\,G\,\a\,i|_\S=\var\,|G|_\typG\,\u{\i_\a}\,\u{i}$,
  \item $|\emptyset|_\typG=\nil\,\code\N$,
  \item $|G|_\typG=\cons\,\code\N\,\u{m}\,|G-m|_\typG$ if
    $G\neq\emptyset$, where $m$ is the smallest element of $G$ wrt the
    total ordering $\a_1\le\a_2$ such that $\i_{\a_1}\le\i_{\a_2}$.
  \end{itemize}

  Then, we can prove that, for all $\b$-normal Dedukti term in
  $\Sigma_U$, we have $|t^*|_c=t$, by induction on $t$. Therefore, we
  have {\bf(f)} the function $\cdot^*$ is injective.

  Assume now that we have two levels $l_1$ and $l_2$ such that
  $l_1\simeq l_2$. Let $t_i$ be the normal form of $|l_i|_c$. We have
  $|l_i|_c\rw^* t_i$, $l_i\simeq t_i^*$ by (d), and $t_i^*$ a GNF by
  (e). Therefore, $t_1^*=t_2^*$ and, by (f), $t_1=t_2$. Therefore,
  $\Sigma_U\th_D|l_1|_c\equiv|l_2|_c$.
\end{proof}


\newpage
{\bf Proof of Theorem \ref{th-sound}:}
\begin{proof}
  Let (*) be the assumption that Lean has convertibility of types and
  injectivity of $\U$. For the sake of simplicity, we will drop
  $\chi$ and $\Sigma$ in the following, and simply write $t\deq u$
  instead of $\Sigma\th_D t\equiv u$.

  We start by proving a few small facts.
  \begin{enumerate}[label={\bf(\alph*)}]\itemsep+1mm
  \item\label{a} If $\th_D|\G|$ then $|\G|\th_D|l|:\L$ and
    $|\G|\th_D\U|l|:\Type$.

  \item\label{b} If $\th\G$ then $||\U l||_\G\deq\U|l|$.
    Proof: $||\U l||_\G=\e\,|\chi_\G(\U l)|\,(\sort|l|)\rw\U|l|$.

  \item\label{c} If $\G\th A:\U l$ and
    $|\G|\th_D|A|_\G:||\U l||_\G$ then
    $||\U l||_\G\deq\U|\chi_\G(A)|$.

    \vspace*{1mm}
    Proof: By (\ref{b}), $||\U l||_\G\deq\U|l|$. By
    definition of $\chi$, $\G\th A:\U l'$ where
    $l'=\chi_\G(A)$. By (*), $\th\U l\equiv\U l':s$ and
    $l\simeq l'$. By Theorem \ref{th-univ-poly-correct},
    $|l|\deq|l'|$. Therefore, $||\U l||\deq\U|\chi_\G(A)|$.

  \item\label{d} If $\th\G$ and $\th_D|\G|$ then
    $|\G|\th_D||\U l||_\G:\Type$.

    \vspace*{1mm}
    Proof: By definition,
    $||\U l||_\G=\e\,|l'|\,(\sort|l|)$ where
    $l'=\chi_\G(\U l)$. By (decl),
    $|\G|\th_D\e:\Pi\a:\L,\U\a\to\Type$ and
    $|\G|\th_D\sort:\Pi\a:\L,\U(\dsucc\,\a)$. By (\ref{a}),
    $|\G|\th|l|:\L$. By (app), $|\G|\th_D\sort|l|:\U(\dsucc|l|)$.
    By (sort), $\G\th\U l:\U(\lsucc\,l)$ and, by
    definition of $\chi$, $\G\th\U l:\U l'$. Hence, by (*),
    $\G\th\U(\lsucc\,l)\equiv\U l':s$ and $\lsucc\,l\simeq
    l'$. So, by Theorem \ref{th-univ-poly-correct},
    $\dsucc|l|=|\lsucc\,l|\deq|l'|$ and
    $\U(\dsucc|l|)\deq\U|l'|$.
    Now, by (\ref{a}), $|\G|\th_D\U|l'|:\Type$. Therefore, by
    (conv), $|\G|\th_D\sort|l|:\U|l'|$ and, by (app),
    $|\G|\th_D||\U l||_\G:\Type$.
    
  \item\label{e} If $\G\th A:\U l$ and
    $|\G|\th_D|A|_\G:||\U l||_\G$ then $|\G|\th_D||A||_\G:\Type$.

    \vspace*{1mm}
    Proof: By definition, $||A||_\G=\e\,|l'|\,|A|_\G$ where
    $l'=\chi_\G(A)$. By (decl),
    $|\G|\th_D\e:\Pi\a:\L,\U\a\to\Type$. By (\ref{a}),
    $|\G|\th_D|l'|:\L$. By (\ref{c}),
    $||\U l||_\G\deq\U|l'|$. By (\ref{a}),
    $|\G|\th_D\U|l'|:\Type$. Therefore, by (conv),
    $|\G|\th_D|A|_\G:\U|l'|$ and, by (app),
    $|\G|\th_D||A||_\G:\Type$.

  \item\label{f} If $\G\th A\!:\!\U l$,
    $|\G|\th_D|A|_\G\!:\!||\U l||_\G$, $\G,x\!:\!A\th
    B\!:\!\U l'$ and
    $|\G|,x\!:\!||A||_\G\th_D|B|_{\G,x:A}\!:\!||\U l'||_{\G,x:A}$, then
    $||\Pi x\!:\!A,B||_\G\deq\Pi x\!:\!||A||_\G,||B||_{\G,x:A}$ and
    $|\G|\th_D\Pi x\!:\!||A||_\G,||B||_{\G,x:A}\!:\!\Type$.

    \vspace*{1mm}
    Proof: Let $l_A=\chi_\G(A)$, $l_B=\chi_{\G,x:A}(B)$ and
    $l=\chi_\G(\Pi x:A,B)$. By definition, $||\Pi
    x:A,B||_\G=\e\,|l|\,(\pi\,|l_A|\,|l_B|\,|A|_\G\,(\l
    x\!:\!||A||_\G,|B|_{\G,x:A}))\rw \Pi
    x:\e\,|l_A|\,|A|_\G,\e\,|l_B|\,((\l
    x\!:\!||A||_\G,|B|_{\G,x:A})x)$ $\rw_\b \Pi
    x:\e\,|l_A|\,|A|_\G,\e\,|l_B|\,|B|_{\G,x:A}=\Pi
    x\!:\!||A||_\G,||B||_{\G,x:A}$.

    Moreover, by (\ref{e}), $|\G|\th_D||A||_\G:\Type$ and
    $|\G|,x\!:\!||A||_\G\th_D||B||_{\G,x:A}:\Type$. Therefore, by
    (prod), $|\G|\th_D\Pi x\!:\!||A||_\G,||B||_{\G,x:A}:\Type$.

  \item\label{g} If $\G\th t:A$ and $\D\th\sigma:\G$, then
    $|t\sigma|_\D\deq|t|_\G|\sigma|_\D$ where
    $x|\sigma|_\D=|x\sigma|_\D$. Moreover, if $(\G,t)\in\cT$, then
    $||t\sigma||_\D\deq||t||_\G|\sigma|_\D$.

    \vspace*{1mm}
    Proof: First note that, if
    $|t\sigma|_\D\deq|t|_\G|\sigma|_\D$ and $(\G,t)\in\cT$, then
    $||t\sigma||_\D\deq||t||_\G|\sigma|_\D$. Indeed, by definition, we
    have $||t||_\G=\e\,l\,|t|_\G$ where $l=\chi_\G(t)$,
    $||t\sigma||_\D=\e\,l'\,|t\sigma|_\D$ where $l'=\chi_\D(t\sigma)$,
    $\G\th t:\U l$ and $\D\th t\sigma:\U l'$. But, by
    substitution, we also have $\D\th t\sigma:\U l$. Hence, by
    (*), $l\simeq l'$ and, by Theorem \ref{th-univ-poly-correct},
    $|l|\deq|l'|$. Therefore, $||t\sigma||_\D\deq||t||_\G|\sigma|_\D$.
    
    We now prove that, if $\G\th t:A$ and $\D\th\sigma:\G$, then
    $|t\sigma|_\D\deq|t|_\G\theta$ where $\theta=|\sigma|_\D$, by
    induction on $t$:
    \begin{itemize}
    \item $t=\U l$. $|t\sigma|_\D=\U l=|t|_\G\theta$.
    \item $t=x$. $|t\sigma|_\D=|x\sigma|_\D=x\theta=|x|_\G\theta$.
    \item
      $t=uv$. $|(uv)\sigma|_\D=|(u\sigma)(v\sigma)|_\D=|u\sigma|_\D|v\sigma|_\D$. By
      induction hypothesis, $|u\sigma|_\D\deq|u|_\G\theta$ and
      $|v\sigma|_\D\deq|v|_\G\theta$. Thus,
      $|(uv)\sigma|_\D\eq(|u|_\G\theta)(|v|_\G\theta)=(|u|_\G|v|_\G)\theta=|uv|_\G\theta$.
    \item $t=\l x\!:\!A,u$. Wlog we can assume that
      $x\notin\FV(\sigma)=\dom(\sigma)\cup\{\FV(x\sigma)\mid
      x\in\dom(\sigma)\}$. Hence, $|t\sigma|_\D=|\l
      x\!:\!A\sigma,u\sigma|_\D=\l
      x\!:\!||A\sigma||_\D,|u\sigma|_\D$. By induction hypothesis,
      $||A\sigma||_\D\deq||A||_\G\theta$ and
      $|u\sigma|_\D\deq|u|_\G\theta$. Therefore, $|t\sigma|_\D\deq\l
      x\!:\!||A||_\G\theta,|u|_\G\theta=(\l
      x\!:\!||A||_\G,|u|_\G)\theta=|t|_\G\theta$.
    \item $t=\Pi x\!:\!A,B$. We have
      $|t|_\G=\pi\,l_A\,l_B\,|A|_\G\,(\l x\!:\!||A||_\G,|B|_\G)$ where
      $l_A=\chi_\G(A)$ and $l_B=\chi_{\G,x:A}(B)$. Wlog we can assume
      that $x\notin\FV(\sigma)$. Hence, $|t\sigma|_\D=|\Pi
      x\!:\!A\sigma,B\sigma|_\D=\pi\,l_A'\,l_B'\,|A\sigma|_\D\,(\l
      x\!:\!||A\sigma||_\D,|B\sigma|_\D)$ where
      $l_A'=\chi_\D(A\sigma)$ and
      $l_B'=\chi_{\D,x:A\sigma}(B\sigma)$. By induction hypothesis,
      $|A\sigma|_\D\deq|A|_\G\theta$,
      $||A\sigma||_\D\deq||A||_\G\theta$ and
      $|B\sigma|_\D\deq|B|_\G\theta$. By definition of $\chi$, we have
      $\G\th A:\U l_A$, $\G,x\!:\!A\th B:\U l_B$, $\D\th
      A\sigma:\U l_A'$ and $\D,x\!:\!A\sigma\th
      B\sigma:\U l_B'$. But, by subtitution, we also have $\D\th
      A\sigma:\U l_A$ and $\D,x\!:\!A\sigma\th
      B\sigma:\U l_B$. Hence, by (*), $l_A\simeq l_A'$ and
      $l_B\simeq l_B'$ and, by Theorem \ref{th-univ-poly-correct},
      $|l_A|\deq|l_A'$ and $l_B|\deq|l_B'|$. Therefore,
      $|t\sigma|_\D\deq|t|_\G\theta$.
    \end{itemize}

  \item\label{h} If $\G\th A_1\equiv A_2:s\in\cS$ and
    $|A_1|_\G\deq|A_2|_\G$ then $||A_1||_\G\deq||A_2||_\G$.

    \vspace*{1mm}
    Proof: By definition, $s=\U l$ for some $l$,
    $||A_i||=\e\,|\chi_\G(A_i)|\,|A_i|_\G$ and $\G\th
    A_i:\U\chi_\G(A_i)$. By inversion, $\G\th A_i:s$. By (*),
    $\G\th\U\chi_\G(A_i)\equiv s:s'$ for some $s'$, and
    $\chi_\G(A_i)\simeq l$. Hence,
    $\chi_\G(A_1)\simeq\chi_\G(A_2)$. By Theorem
    \ref{th-univ-poly-correct},
    $|\chi_\G(A_1)|\deq|\chi_\G(A_2)|$. Therefore,
    $||A_1||_\G\deq||A_2||_\G$.

  \item\label{i} Let $\G\,\cC_\equiv\,\G'$ iff $\G=\G_1,x:A,\G_2$,
    $\G'=\G_1,x:A',\G_2$ and $\G_1\th A\equiv A':s$. If $\G\th t:A$
    and $\G\,\cC_\equiv\,\G'$ then $\G'\th t:A$.

    \vspace*{1mm}
    Proof by induction on $\G\th t:A$:
    \begin{itemize}
    \item (decl) $\cfrac{\th\G\quad \G\th A:s\in\cS\quad
      x\notin\G}{\th\G,x\!:\!A}$. There are two cases:
      \begin{itemize}
      \item $\G,x\!:\!A\,\cC_\equiv\,\G',x\!:\!A$ with
        $\G\,\cC_\equiv\,\G'$. By induction hypothesis, we have
        $\th\G'$ and $\G'\th A:s$. Therefore, by (decl),
        $\th\G',x\!:\!A$.
      \item $\G,x\!:\!A\,\cC_\equiv\,\G,x\!:\!A'$ with $\G\th A\equiv
        A':s'$ for some $s'$. Therefore, by (decl), $\th\G,x:A'$.
      \end{itemize}
    \item (var) $\cfrac{\th\G\quad x\!:\!A\in\G}{\G\th x:A}$. By
      induction hypothesis, $\th\G'$. Moreover,
      $\G=\G_1,x:A,\G_2$. There are two cases:
      \begin{itemize}
      \item $\G'=\G_1',x:A,\G_2$ and $\G_1\,\cC_\equiv\,\G_1'$, or
        $\G'=\G_1,x:A,\G_2'$ and $\G_2\,\cC_\equiv\,\G_2'$. Then, $x:A\in\G'$
        and, by (var), $\G'\th x:A$.
      \item $\G'=\G_1,x:A',\G_2$ and $\G_1\th A\equiv A':s$ for some
        $s$. By (var), $\G\th x:A'$ and, by (conv), $\G\th x:A$.
      \end{itemize}
    \item The other cases are easily dealt with by induction
      hypothesis.
    \end{itemize}

  \item\label{j} The Lean typing relation is unchanged if we replace
    (app) by\\ (app') $\cfrac{\G\th t:\Pi x\!:\!A,B\quad \G\th
    u:A\quad \G\th B[x/u]:s\in\cS}{\G\th t\ u:B[x/u]}$, and (proj)
    by\\ (proj') $\cfrac{\G\th v:S[\vl]\vt\quad \G\th
      S[\vl]\vt:s\in\cS}{\G\th v.i:B_i\sigma_i}$, where the fact that
    the type of the conclusion is typable by a sort (a property of
    $\th$) is added in premises.

    \vspace*{1mm}
    Proof: Let $\th'$ be the typing relation with (app') instead of (app),
    and (proj') instead of (proj). One can easily check that both
    $\th$ and $\th'$ satisfy:
    \begin{itemize}
    \item Stability by substitution: if $\G\th t:A$ and
      $\D\th\sigma:\G$ then $\D\th t\sigma:A\sigma$.
    \item Correctness of types: if $\G\th t:A$ then $\G\th A:s$ for
      some $s\in\cS$.
    \end{itemize}
    We trivially have ${\th'}\sle{\th}$. Then, one can prove that
    ${\th}\sle{\th'}$ by induction on $\th$. We only detail the
    case of (app). By induction hypothesis, $\G\th't:\Pi x:A,B$ and
    $\G\th'u:A$. By correctness of types, $\G\th'\Pi x:A,B:s$ for some
    $s$. By inversion of typing rules, $\G,x:A\th'B:s'$ for some
    $s'$. By stability by substitution, $\G\th'B[x/u]:s'$. Therefore,
    by (app'), $\G\th'tu:B[x/u]$.
    
  \item\label{k} If $\G\th t:A$, $\G\sle\G'$ and $\th\G'$, then
    $|t|_\G\deq|t|_{\G'}$.

    \vspace*{1mm}
    Proof: Assume that $|t|_\G$ has a subterm of the form $|l|$ at
    position $p$. By definition, there are $\D$ and $B$ such that
    $l=\chi_{\G,\D}(B)$ and $|t|_{\G'}$ has at position $p$ the
    subterm $|l'|$ where $l'=\chi_{\G',D}(B)$. By definition,
    $\G,\D\th B:\U l$ and $\G',\D\th B:\U l'$. Since $\G\sle\G'$
    and $\th\G'$, by weakening, we also have $\G',\D\th B:\U
    l$. Hence, by (*), we have $l\simeq l'$.

  \item\label{l} If $\G\th A:s\in\cS$, $\G\sle\G'$ and $\th\G'$,
    then $||A||_\G\deq||A||_{\G'}$.

    \vspace*{1mm} Proof: By definition, $||A||_\G=\e|l||A|_\G$ with
    $l=\chi_\G(A)$, $\G\th A:\U l$, $||A||_{\G'}=\e|l'||A|_{\G'}$ with
    $l'=\chi_{\G'}(A)$, and $\G'\th A:\U l'$. Since $\G\sle\G'$ and
    $\th\G'$, by weakening, we also have $\G'\th A:\U l$. Hence, by
    (*), we have $l\simeq l'$. Therefore, $||A||_\G\deq||A||_{\G'}$
    since $|A|_\G\deq|A|_{\G'}$ by (\ref{k}).
  \end{enumerate}
  
  \noindent
  We then prove the theorem by induction on $\G\th t:A$ using (app').
  \begin{itemize}
  \item (empty) $\cfrac{}{\th\emptyset}$. We have $\th_D\emptyset$ by
    (empty).

  \item (decl) $\cfrac{\th\G\quad \G\th A:s\in\cS\quad
    x\notin\G}{\th\G,x\!:\!A}$. We have to prove
    $\th_D|\G|,x\!:\!||A||_\G$. By induction hypothesis, $\th_D|\G|$
    and $|\G|\th_D|A|_\G:||s||_\G$. By (\ref{e}),
    $|\G|\th_D||A||_\G:\Type$. Therefore, by (decl),
    $\th_D|\G|,x\!:\!||A||_\G$.

  \item (var) $\cfrac{\th\G\quad x\!:\!A\in\G}{\G\th x:A}$. We have to
    prove $|\G|\th_D|x|_\G:||A||_\G$. By definition, $|x|_\G=x$,
    $\G=\G_1,x\!:\!A,\G_2$ and $x:||A||_{\G_1}\in|\G|$. Therefore, by
    (var), $|\G|\th_D x:||A||_{\G_1}$. Since
    $||A||_{\G_1}\deq||A||_\G$ by (\ref{l}) and
    $|\G|\th_D||A||_\G:\Type$ by (\ref{e}), we have $|\G|\th_D
    x:||A||_\G$ by (conv).

  \item (lam) $\cfrac{\G,x\!:\!A\th t:B\quad \G\th\Pi
    x\!:\!A,B:s\in\cS}{\G\th \l x\!:\!A,t:\Pi x\!:\!A,B}$. We have to
    prove $|\G|\th_D|\l x\!:\!A,t|_\G:||\Pi x\!:\!A,B||_\G$. By
    definition, $|\l x\!:\!A,t|_\G=\l x\!:\!||A||_\G,|t|_\G$. Since
    $\G\th\Pi x\!:\!A,B:s$, we have $\G\th A:\U l_A$ and
    $\G,x\!:\!A\th B:\U l_B$ for some $l_A$ and $l_B$ as
    sub-judgments. Hence, by induction hypothesis, we have
    $|\G|,x\!:\!||A||_\G\th_D|t|_\G:||B||_{\G,x:A}$, $|\G|\th_D|\Pi
    x\!:\!A,B|_\G:||s||_\G$, $|\G|\th_D|A|_\G:||\U l_A||_\G$ and
    $|\G|,x\!:\!||A||_\G\th_D|B|_{\G,x:A}:||\U l_B||_{\G,x:A}$.
    By (\ref{f}), $||\Pi x\!:\!A,B||_\G\deq\Pi
    x\!:\!||A||_\G,||B||_{\G,x:A}$ and $|\G|\th_D\Pi
    x\!:\!||A||_\G,||B||_{\G,x:A}:\Type$. Hence, by (lam),
    $|\G|\th_D\l x\!:\!||A||_\G,|t|_\G:\Pi
    x\!:\!||A||_\G,||B||_{\G,x:A}$. By (\ref{e}), $|\G|\th_D||\Pi
    x\!:\!A,B||_\G:\Type$. Therefore, by (conv), $|\G|\th_D\l
    x\!:\!||A||_\G,t:||\Pi x\!:\!A,B||_\G$.

  \item (app') $\cfrac{\G\th t:\Pi x\!:\!A,B\quad \G\th u:A\quad \G\th
    B[x/u]:s\in\cS}{\G\th t\,u:B[x/u]}$. We have to prove
    $|\G|\th_D|t\,u|_\G:||B[x/u]||_\G$. By definition,
    $|t\,u|=|t|\,|u|$. By induction hypothesis, $|\G|\th_D|t|_\G:||\Pi
    x\!:\!A,B||_\G$, $|\G|\th_D|u|_\G:||B||_\G$ and
    $|\G|\th_D|B[x/u]|_\G:||s||_\G$. By (\ref{f}), $||\Pi
    x\!:\!A,B||_\G\deq\Pi x\!:\!||A||_\G,||B||_{\G,x:A}$ and
    $|\G|\th\Pi x\!:\!||A||_\G,||B||_{\G,x:A}:\Type$. Hence, by
    (conv), $|\G|\th_D|t|_\G:\Pi x\!:\!||A||_\G,||B||_{\G,x:A}$ and,
    by (app), $|\G|\th|t|_\G\,|u|_\G:||B||_{\G,x:A}[x/|u|_\G]$. By
    (\ref{g}), $||B||_{\G,x:A}[x/|u|_\G]\deq||B[x/u]||_\G$. By
    (\ref{e}), $|\G|\th_D||B[x/u]||_\G:\Type$. Therefore, by (conv),
    $|\G|\th_D|t\,u|_\G:||B[x/u]||_\G$.

  \item (sort) $\cfrac{\th\G\quad (s,s')\in\cA}{\G\th s:s'}$. We
    have to prove that $|\G|\th_D|s|_\G:||s'||_\G$. By definition,
    $s=\U l$ for some $l$, $s'=\U(\lsucc\,l)$ and
    $|s|=\sort|l|$. By induction hypothesis, we have $\th_D|\G|$. By
    (decl), $|\G|\th_D\sort:\Pi\a:\L,\U(\lsucc\,\a)$. By (app),
    $\th_D s:\U(\lsucc|l|)$. By (\ref{b}),
    $\U(\lsucc|l|)=\U|\lsucc\,l|\deq||s'||$. By (\ref{d}),
    $|\G|\th_D||s'||_\G:\Type$. Therefore, by (conv),
    $|\G|\th_D|\U l|:||\U(\lsucc\,l)||$.
    
  \item (prod) $\cfrac{\G\th A:s_A\quad \G,x\!:\!A\th B:s_B\quad
    ((s_A,s_B),s)\in\cP}{\G\th\Pi x\!:\!A,B:s}$. We have to prove
    $|\G|\th_D|\Pi x\!:\!A,B|_\G:||s||_\G$.
    By definition, $s_A=\U l_A$, $s_B=\U l_B$,
    $s=\U l$, $l=\limax\,l_A\,l_B$, $|\Pi
    x\!:\!A,B|_\G=\pi\,|l_A'|\,|l_B'|\,|A|_\G\,(\l
    x\!:\!||A||_\G,|B|_{\G,x:A})$, $l_A'=\chi_\G(A)$,
    $l_B'=\chi_{\G,x:A}(B)$. By (decl), $|\G|\th_D\pi:\Pi \a_1:\L,\Pi
    \a_2:\L,\Pi
    t_1:\U\a_1,(\e\,\a_1\,t_1\to\U\a_2)\to\U(\dimax\,\a_1\,\a_2)$.
    By induction hypothesis, $|\G|\th_D|A|_\G:||s_A||_\G$ and
    $|\G|,x\!:\!||A||_\G\th_D|B|_{\G,x:A}:||s_B||_{\G,x:A}$. Moreover:
    \begin{itemize}
    \item By (\ref{a}), $|\G|\th|l_A'|:\L$ and $|\G|\th|l_B'|:\L$.
    \item By (\ref{c}), $||s_A||_\G\deq\U|l_A'|$. By (\ref{a}),
      $|\G|\th_D\U|l_A'|:\Type$. So, by (conv),
      $|\G|\th_D|A|_\G:\U|l_A'|$.
    \item By (\ref{d}), $|\G|\th_D||A||_\G:\Type$ and
      $|\G|,x\!:\!||A||_\G|\th_D||s_B||_{\G,x:A}:\Type$. By (prod),
      $|\G|\th_D||A||_\G\to||s_B||_{\G,x:A}:\Type$. By (lam), $\l
      x\!:\!||A||_\G,|B|_{\G,x:A}:||A||_\G\to||s_B||_{\G,x:A}$. By
      (\ref{c}), $||s_B||_{\G,x:A}\deq\U|l_B'|$. By (\ref{a}),
      $|\G|\th_D\U|l_B'|:\Type$. Therefore, by (conv), $|\G|\th_D\l
      x:||A||_\G,|B|_{\G,x:A}:||A||_\G\to\U|l_B'|$.
    \end{itemize}
    Therefore, by (app), $|\G|\th_D|\Pi
    x\!:\!A,B|_\G:\U(\dimax\,|l_A'|\,|l_B'|)$.
    Since $\G\th A:\U l_A$ and $\G\th A:\U l_A'$, by (*),
    $l_A\simeq l_A'$ and, by Theorem \ref{th-univ-poly-correct},
    $|l_A|\deq|l_A'|$. Since $\G,x\!:\!A\th B:\U l_B$ and
    $\G,x\!:\!A\th B:\U l_B'$, by (*), $l_B\simeq l_B'$ and, by
    Theorem \ref{th-univ-poly-correct}, $|l_B|\deq|l_B'|$.
    Hence,
    $\U(\dimax\,|l_A'|\,|l_B'|)\deq\U(\dimax\,|l_A|\,|l_B|)=\U|l|\deq||s||_\G$
    by (\ref{b}). And since $|\G|\th_D||s||_\G:\Type$ by (\ref{d}),
    we have $|\G|\th_D|\Pi x\!:\!A,B|_\G:||s||_\G$ by (conv).

  \item (proj') $\cfrac{\G\th v:S[\vl]\vt\quad \G\th
    S[\vl]\vt:s\in\cS}{\G\th v.i:B_i\sigma}$. We have to prove that
    $|\G|\th_D|v.i|_\G:||B_i\sigma||_\G$. This follows from the fact
    that $v.i$ can be defined as $\prj_S^i[\vl]\vt v$.

    \hide{
    By definition,
    $|v.i|_\G=\prj_S^i\,(\dinst|\vl|)\,|\vt|_\G\,|v|_\G$. By induction
    hypothesis, $\th_D|\G|$. Hence, by (var),
    $|\G|\th_D\prj_S^i:\Pi\va\!:\!\L,||\Pi\vp\!:\!\vA,S[\va]\vp\to
    B_i\sigma||_\D$ where $\D$ is the context in which $S$ is
    defined. By (\ref{l}) and (conv),
    $|\G|\th_D\prj_S^i:\Pi\va\!:\!\L,||\Pi\vp\!:\!\vA,S[\va]\vp\to
    B_i\sigma||_\G$. By (\ref{f}), $||\Pi\vp\!:\!\vA,S[\va]\vp\to
    B_i\sigma||_\G\deq\Pi p_1\!:\!||A_1||_{\G_0},\dots,\Pi
    p_n\!:\!||A_n||_{\G_{n-1}},||S[\va]\vp||_{\G_n}\to||B_i\sigma||_{\G_n}$
    where $\G_0=\G$ and
    $\G_{i+1}=\G,p_1\!:\!||A_1||_{\G_0},\dots,p_{i+1}\!:\!||A_{i+1}||_{\G_i}$. By
    induction hypothesis,
    $|\G|\th_D|t_i|_\G:||A_i\sigma||_\G$. \TODO{}
    }
    
  \item (conv) $\cfrac{\G\th t:A\quad \G\th A\equiv B:s\in\cS}{\G\th
    t:B}$. We have to prove $|\G|\th_D|t|_\G:||B||_\G$. By inversion,
    we have $\G\th B:s$ as sub-judgment. By induction hypothesis,
    $|\G|\th_D|t|_\G:||A||_\G$ and $|\G|\th_D|B|_\G:||s||_\G$. By
    (\ref{d}), $|\G|\th_D||B||_\G:\Type$. By (conv), it suffices to
    prove $||A||_\G\deq||B||_\G$. By (\ref{h}), it suffices to prove
    $|A|_\G\deq|B|_\G$, which we do hereafter:
    
  \item (refl) $\cfrac{\G\th t:A}{\G\th t\equiv t:A}$. We have to
    prove $|t|_\G\deq|t|_\G$, which follows by (refl).
    
  \item (sym) $\cfrac{\G\th t\equiv u:A}{\G\th u\equiv t:A}$. By
    induction hypothesis, $|t|_\G\deq|u|_\G$. Thus, by (sym),
    $|u|_\G\deq|t|_\G$.
    
  \item (trans) $\cfrac{\G\th t\equiv u\quad \G\th u\equiv v}{\G\th
    t\equiv v}$. By induction hypothesis, $|t|_\G\deq|u|_\G$ and
    $|u|_\G\deq|v|_\G$. Therefore, by (trans), $|t|_\G\deq|v|_\G$.

  \item (cong$\Pi$) $\cfrac{\G\th A\equiv A':s_A\quad \G,x:A\th
    B\equiv B':s_B\quad ((s_A,s_B),s)\in\cP}{\G\th\Pi x\!:\!A,B\equiv
    \Pi x\!:\!A',B':s}$. We have to prove $|\Pi x\!:\!A,B|_\G\deq|\Pi
    x\!:\!A',B'|_\G$. By definition, $s_A=\U l_A$,
    $s_B=\U l_B$, $|\Pi
    x\!:\!A,B|_\G=\pi\,|\chi_\G(A)|\,|\chi_{\G,x:A}(B)|\,|A|_\G\,(\l
    x\!:\!||A||_\G,|B|_{\G,x:A})$, $|\Pi
    x\!:\!A',B'|_\G=\pi\,|\chi_\G(A')|\,|\chi_{\G,x:A'}(B')|$
    $|A'|_\G\,(\l x\!:\!||A'||_\G,|B'|_{\G,x:A'})$. By induction
    hypothesis, $|A|_\G\deq|A'|_\G$ and
    $|B|_{\G,x:A}\deq|B'|_{\G,x:A}$. Hence,
    $||A||_\G\deq||A'||_\G$. Since $\G\th A:\U l_A$ and $\G\th
    A:\U\chi_\G(A)$, by (*) and Theorem
    \ref{th-univ-poly-correct}, $|l_A|\deq|\chi_\G(A)|$. Since
    $\G,x:A\th B:\U l_B$ and $\G,x:A\th
    B:\U\chi_{\G,x:A}(B)$, by (*) and Theorem
    \ref{th-univ-poly-correct}, $|l_B|\deq|\chi_{\G,x:A}(B)|$.  Since
    $\G,x:A\th B:\U l_B$ and $\G\th A\equiv A':s_A$, by (\ref{i}),
    $\G,x:A'\th B:\U l_B$. Since moreover $\G,x:A'\th
    B':\U\chi_{\G,x:A'}(B')$, by (*) and Theorem
    \ref{th-univ-poly-correct},
    $|l_B|\deq|\chi_{\G,x:A'}(B')|$. Hence,
    $|\chi_{\G,x:A}(B)|\deq|\chi_{\G,x:A'}(B')|$ and $|\Pi
    x\!:\!A,B|_\G\deq|\Pi x\!:\!A',B'|_\G$.

  \item (cong$\l$) $\cfrac{\G\th A\equiv A':s_A\in\cS\quad \G,x:A\th
    t\equiv t':B\quad \Pi x:A,B:s_B\in\cS}{\G\th\l x:A,t\equiv \l
    x:A',t':\Pi x:A,B}$. We have to prove $|\l x:A,t|_\G\deq|\l
    x:A',t'|_\G$. By definition, $|\l x:A,t|_\G=\l x:||A||_\G,|t|_\G$
    and $|\l x:A',t'|_\G=\l x:||A'||_\G,|t'|_\G$. By induction
    hypothesis, $|A|_\G\deq|A'|_\G$ and
    $|t|_{\G,x:A}\deq|t'|_{\G,x:A}$. Hence, $||A||_\G\deq||A'||_\G$
    and $|\l x:A,t|_\G\deq|\l x:A',t'|_\G$.
    
  \item (cong@) $\cfrac{\G\th t\equiv t':\Pi x:A,B\quad \G\th u\equiv
    u':A}{\G\th t\,u \equiv t'\,u':B[x/u]}$. We have to prove
    $|t\,u|_\G\deq|t'\,u'|_\G$. By definition,
    $|t\,u|_\G=|t|_\G\,|u|_\G$ and $|t'\,u'|_\G=|t'|_\G\,|u'|_\G$. By
    induction hypothesis, $|t|_\G\deq|t'|_\G$ and
    $|u|_\G\deq|u'|_\G$. Therefore, $|t\,u|_\G\deq|t'\,u'|_\G$.
    
  \item (cong$\equiv$) $\cfrac{\G\th t\equiv u:A\quad \G\th A\equiv
    B:s\in\cS}{\G\th t\equiv u:B}$. We have to prove
    $|t|_\G\deq|u|_\G$, which follows by induction hypothesis.

  \item (cong-proj) $\cfrac{\G\th v\equiv v':S[\vl]\vt}{\G\th
    v.i\equiv v'.i:B_i\sigma_i}$. We have to prove that
    $|v.i|_\G\deq|v'.i|_\G$, which holds since, by induction
    hypothesis, $|v|_\G\deq|v'|_\G$ and, by definition,
    $|v.i|_\G=\prj_S^i\,(\dinst|\vl|)\,|\vt|_\G\,|v|_\G$ and
    $|v'.i|_\G=\prj_S^i\,(\dinst|\vl|)\,|\vt|_\G\,|v'|_\G$.
    
  \item ($\b$-fun) $\cfrac{\G,x:A\th t:B\quad \G\th\Pi x:A,B:s\in\cS\quad
    \G\th u:A}{\G\th {(\l x:A,t)\,u}\equiv t[x/u]:B[x/u]}$. We have to
    prove $|(\l x\!:\!A,t)\ u|_\G\deq|t[x/u]|_\G$. By definition,
    $|(\l x\!:\!A,t)\,u|=(\l
    x\!:\!||A||_\G,|t|_{\G,x:A})$ $|u|_\G\rw_\b|t|_{\G,x:A}[x/|u|_\G]$. Since
    $\G,x\!:\!A\th t:B$ and $\G\th u:A$, by (\ref{g}),
    $|t|_{\G,x:A}[x/|u|_\G]\deq|t[x/u]|_\G$. Therefore, $|(\l
    x\!:\!A,t)\ u|_\G\deq|t[x/u]|_\G$.

  \item ($\b$-rec) and ($\b\text{-}\rec\text{-}S$) Immediate since
    Dedukti implements the same reduction rules for recursors as Lean.

  \item ($\b$-quot)
    $\cfrac{\G\th\lift[l,m]ARBfh(\class[l]AR\,a):B}{\G\th\lift[l,m]ARBfh(\class[l]AR\,a)\equiv
    fa:B}$. Immediate since Dedukti implements the same reduction rule
    for quotients as Lean.

  \item ($\b$-proj)
    $\cfrac{\G\th\mk_S[\vl]\vt\,\vu:S[\vl]\vt}{\G\th(\mk_S[\vl]\vt\,\vu).i\equiv
    u_i:B_i\sigma_i}$. By definition,
    $|(\mk_S[\vl]\vt\,\vu).i|_\G=\prj_S^i$
    $(\dinst|\vl|)\,|\vt|_\G\,(\mk_S\vl|\vt|_\G|\vu|_\G\deq|u_i|_\G$.

  \item ($\eta$-fun) $\cfrac{\G\th t:\Pi x:A,B\quad
    x\notin\FV(t)}{\G\th t\equiv\l x\!:\!A,t\ x:\Pi x\!:\!A,B}$. We
    have to prove that $|t|_\G\deq|\l x\!:\!A,t\ x|_\G$. By
    definition, $|\l x\!:\!A,t\ x|_\G=\l
    x\!:\!||A||_\G,|t|_\G\,x$. Since the translation preserves free
    variables, $x\notin|t|_\G$. Therefore, by $\eta$-reduction,
    $|t|_\G\deq|\l x\!:\!A,t\ x|_\G$.

  \item ($\eta$-proj) $\cfrac{\G\th v:S[\vl]\vt\quad \G\th v.1\equiv
    u_1:B_1\sigma_1\quad\dots\quad\G\th v.n\equiv
    u_n:B_n\sigma_n}{\G\th v\equiv \mk_S[\vl]\vt\,\vu}$. We have to
    prove that $|v|_\G\deq|\mk_S[\vl]\vt\,\vu|_\G$. By definition,
    $|\mk_S[\vl]\vt\,\vu|_\G=\mk_S(\dinst|\vl|)|\vt|_\G\,|\vu|_\G$. For
    all $i$, by induction hypothesis we have $|v.i|_\G\deq|u_i|_\G$
    where, by definition,
    $|v.i|_\G=\prj_S^i\,(\dinst|\vl|)\,|\vt|_\G\,|v|_\G$. Thus,
    $|\mk_S[\vl]\vt\,\vu|_\G\deq\mk_S(\dinst|\vl|)|\vt|_\G(\prj_S^1\,(\dinst|\vl|)\,|\vt|_\G\,|v|_\G)\dots(\prj_S^n\,(\dinst|\vl|)\,|\vt|_\G\,|v|_\G)\deq|v|_\G$
    by the ($\eta$-proj) rule in Dedukti.
  \end{itemize}
\end{proof}

\end{document}